\newtheorem{thm}{Theorem}[section]
\newtheorem{defn}{Definition}[section]
\newtheorem{prop}[thm]{Proposition}
\newtheorem{coro}[thm]{Corollary}
\newtheorem{lemma}[thm]{Lemma}
\newtheorem{assump}{Assumption}[section]
\newtheorem{example}{Example}[section]
\newtheorem{remark}{Remark}
\newenvironment{proof}{\hspace{0ex}\textsc{Proof}.\hspace{1ex}}{\hfill$\Box$\\[2ex] }
\DeclareMathOperator{\BE}{\mathbf{E}}
\DeclareMathOperator{\BP}{\mathbf{P}}
\DeclareMathOperator{\PP}{\mathcal{P}}
\DeclareMathOperator{\R}{\mathbb{R}}
\DeclareMathOperator{\Rin}{\mathfrak{R}}
\DeclareMathOperator{\bx}{X}
\DeclareMathOperator{\mgf}{\mathfrak{M}}
\DeclareMathOperator{\seta}{\mathcal{A}}
\DeclareMathOperator{\setb}{\mathcal{B}}
\DeclareMathOperator{\setc}{\mathcal{C}}
\DeclareMathOperator{\setd}{\mathcal{D}}
\DeclareMathOperator{\setl}{\mathcal{L}}
\DeclareMathOperator{\lone }{\mathcal{L}^1}
\DeclareMathOperator{\sets}{\mathfrak{S}}
\DeclareMathOperator{\vp}{V_{\PP}}
\DeclareMathOperator*{\defin}{=\!\!\!=}
\newcommand{\dindex}{Duality Index}
\newcommand{\blemma}{Basic Lemma}
\begin{document}
\title{Investment under Duality Risk Measure\thanks{The author thanks Ching, Wai-Ki at the Department of Mathematics in the University of Hong Kong for his kind referring of the celebrated work of Aumann and Serrano (2008). This author acknowledges financial support from Hong Kong Early Career Scheme (No. 533112), Hong Kong General Research Fund (No. 529711) and Hong Kong Polytechnic University.}
}
\date{}
\author{Zuo Quan Xu\thanks{Department of Applied Mathematics, Hong Kong Polytechnic University, Kowloon, Hong Kong. Email: \href{mailto:maxu@polyu.edu.hk}{maxu@polyu.edu.hk}. }}
\date{12 March 2013 }
\maketitle
\begin{abstract}
One index satisfies the duality axiom if one agent, who is uniformly more risk-averse than another, accepts a gamble,  the latter accepts any less risky gamble under the index. Aumann and Serrano (2008) show that only one  index  defined for so-called gambles satisfies the duality and positive homogeneity axioms. We call it a duality index.  This paper extends the definition of duality index to all outcomes including all gambles, and considers a portfolio selection problem in a complete market, in which the agent's target is to minimize the index of the utility of the relative investment outcome. By linking this problem to a series of Merton's optimum consumption-like problems, the optimal solution is explicitly derived. It is shown that if the prior benchmark level is too high (which can be verified), then the investment risk will be beyond \emph{any} agent's risk tolerance. If the benchmark level is reasonable, then the optimal solution will be the same as that of one of the Merton's series problems, but with a particular value of absolute risk aversion, which is given by an explicit algebraic equation as a part of the optimal solution. According to our result, it is riskier to achieve the same surplus profit in a stable market than in a less-stable market, which is consistent with the common financial intuition.
\\[3mm]
\noindent
\textbf{Keywords:} Duality Axiom, Duality Risk Measure, Duality Index, Portfolio Selection
\end{abstract}

\section{Introduction}
\noindent
Diamond and Stiglitz (1974) point out that whether or not a person takes a gamble depends on two distinct considerations:
\begin{itemize}
 \item[(i)] The attributes of the gamble and, in particular, how risky it is; and
 \item[(ii)] The attributes of the person and, in particular, how averse he or she is to risk.
\end{itemize}
In terms of the first issue, the concept of the risk measure has been used to explain how risky a gamble is. Many well-studied risk measures are described in the literature, such as the superhedging price, value at risk, tail value at risk, and expected shortfall as well as general coherent risk measures. These measures emphasize certain aspects of risk. However, few of them directly reflect the risk-averse person's attitude; that is, the perspective that ``risk is what risk-averters hate'' (Machina and Rothschild 2008). The entropic risk measure, which depends on such a risk aversion through the exponential utility function, is one of the few to have attempted to capture this feature.
In order to overcome the drawbacks of the existing measures, Aumann and Serrano (2008) have developed one risk measure which emphasizes such a risk-averters' attitude. This preserves many properties of the coherent risk measure such as first-order monotonicity, convexity and positive homogeneity. Unlike the coherent risk measure, however, it is also second-order monotonic, which is consistent with the emphasis on the risk-averters' attitude.
Unfortunately, Aumann and Serrano (2008) only define the measure for a certain type of discrete random variables called gambles. It is acknowledged that most outcomes in financial applications are of continuous or mixed type, so their measure cannot be applied to many of theses outcomes. To incorporate general outcomes such as price of stocks, options and general contingent claims,   this paper   generalizes the definition of the measure to cover all random variables. The measure, like the original, will satisfy an essential axiom, namely the duality axiom. This axiom states that if one agent, who is uniformly more risk-averse than another agent, accepts a gamble,   the latter will accept any less-risky gamble under the measure. It clearly demonstrates the solid connection between the measure and the attitude of the risk-averter. We therefore label it as the duality risk measure or duality index.
The axiomatic characterization of the measure will be considered in detail in the following section.
\par
In terms of the second consideration, utility functions have been used to describe the risk-aversion of an agent. The most widely used utility functions are concave, which represents that the agent is globally risk-averse. Kahneman and Tversky (1979, 1992) consider $S$-shaped utility functions\footnote{$S$-shaped utility function is convex on the negative return and concave on the positive return.} to reflect the risk-seeking attitude of the agent in a loss situation and the risk-averse attitude in a gain situation.  Meanwhile, they also introduce a reference point to separate gain and loss situations. Though many other utility functions are   considered in the literature,  only the globally risk-averse (which includes risk-neutral) agent   as well as a reference point will be discussed in this paper. The reference point reflects the agent's relative financial situation.
\par
To incorporate both considerations, namely how risky an outcome is and how risk-averse the agent is, we introduce a portfolio selection problem. This  problem aims to find out a portfolio that minimizes the duality risk measure of the utility of the relative investment outcome, that is the difference between the investment outcome and the benchmark level. The risk measure addresses the first consideration and the utility function the second. Since the duality risk measure is highly nonlinear, we adopt a novel idea to deal with the portfolio selection problem, by firstly linking the problem to a series of Merton's optimum consumption-like problems, and then solving them using the well-known Lagrange method. It turns out that the original problem is equivalent to one of the series problems but with a particular choice of absolute risk aversion, which is given by an explicit algebraic equation as a part of the explicit optimal solution. Thus the explicit solution of the original problem is derived and the problem is completely solved.
A critical threshold is also derived, so that once the surplus level (that is, the difference between the benchmark level and initial endowment) is beyond a threshold, the investment risk will exceed the agent's risk tolerance. In particular, if the agent is risk-neutral, that is to say with a linear utility function, then the investment risk will grow linearly with respect to (w.r.t.) the surplus level. The investment risk is also positively related to the entropy of the pricing kernel of the market. The result verifies the common  financial intuition  that it is much harder and riskier to achieve the same surplus profit in a stable market than in a less-stable market.
\par
The paper is organized as follows. Section 2 defines the duality risk measure for all outcomes,   studies its properties and axiomatic characterization, and then shows that it is the unique nontrivial index satisfying two axioms. Section 3 presents a portfolio selection problem under a complete market setting. The problem is to find a possible outcome to minimize the duality risk measure of the utility of the relative investment outcome. Section 4 is devoted to solving this portfolio selection problem. We first study how well posed the problem is, that is, whether its value is finite. Then we link it to a series of Merton's optimum consumption-like problems via a bridge problem. The series is then treated using the standard Lagrange method. Finally, the optimal solution and value of the original problem are derived. Analytical and numerical examples are also presented in section 4 to illustrate the main result of this paper. We  conclude the paper in section 5.

\section{Definition and Characterization of the {\dindex} }
\noindent
In order to define the duality risk measure, we need to review some of the concepts used in Aumann and Serrano (2008).
\par
A gamble is a random variable whose mean is positive and that takes finitely many values, some of which are negative.
\par
Say that an agent with utility function $u$ accepts a gamble $g$ at wealth $w$\footnote{Throughout this paper, wealth is constant.} if $\BE[u(w+g)]>u(w)$, where $\BE$ stands for ``expectation''; that is to say, the agent prefers taking that gamble to refusing it at wealth $w$.\footnote{By this definition, no agent accepts 0, which is inconsistent with financial intuition that no loss is an acceptable situation. In Aumann and Serrano (2008), it is not an issue because 0 is not regarded as a gamble. However, we will regard 0 as an outcome in this paper, so we assume that all agents accept 0 throughout this paper. It would not make any difference if 0 was assumed to be accepted by nobody.} Throughout this paper, we only consider the risk-averse (which includes risk-neutral) agent; that is to say, $u$ is concave.
\par
Say that one agent is \textit{uniformly more risk-averse} than another, if whenever the former accepts a gamble at some wealth, the latter accepts that gamble at any wealth, but not vice versa.
\par
A \emph{risk measure} or \emph{index} is a (positive) real-valued function on gambles. A gamble is less risky than another  under an index if its index value is strictly less than that of the latter.
\par
Now we will introduce two important axioms related to indices.
\begin{description}
 \item[Duality Axiom:] If one agent, who is \textit{uniformly more risk-averse than} another, accepts a gamble, then the latter agent will accept any less-risky gamble under the index.
 \item[Positive Homogeneity Axiom:]
 If a gamble is scaled by some positive scalar, then the index value is also scaled by the same scalar.
\end{description}
Aumann and Serrano (2008) show that, up to a positive multiple, there is a unique index satisfying the above two axioms. The duality axiom is more central than the other because together with the weak conditions of continuity and monotonicity it already implies that the index is unique up to the ordinal equivalent. Thus, we call the unique index satisfying both the duality and positive homogeneity axioms the \emph{Aumann-Serrano Duality Risk Measure} or \emph{Aumann-Serrano Duality Index}, or simply the \emph{Duality Risk Measure} or \emph{{\dindex}}. Some important properties of the {\dindex} are listed as follows.
\begin{description}

 \item \emph{Sub-additive}: The {\dindex} of the sum of two gambles is no more than the sum of the indices of each gamble.
 \item \emph{Law-invariant}: The Duality Indices of two identically distributed gambles are the same.
 \item\emph{Convex}: If a gamble is a linear combination of two gambles, then its {\dindex} is no more than the same combination of the indices of each gamble.
 \item \emph{Monotonic}: The {\dindex} decreases monotonically w.r.t. the first- and second-order (stochastic) dominance. \footnote{Say that one gamble \textit{first-order dominates} another one, if its value is always no less than the latter. Say that one gamble \textit{second-order dominates} another one, if the latter can be obtained by replacing some of the former's value with an outcome whose mean is that value. Say that one gamble \textit{ stochastically dominates} another one if there is a gamble distributed like the former that dominates the latter. A gamble $g$ second-order stochastic dominates another gamble $h$ if and only if $\BE[f(g)]\leqslant \BE[f(h)]$ for all decreasing and convex utility functions $f$.}
\end{description}
It is noted that the convexity property is not stated explicitly by Aumann and Serrano (2008), however,  this property will play a very important role in our analysis. It accords with the widely accepted financial wisdom that diversified investment reduces risk.
\par
Now, let us define the {\dindex} for general outcomes.

\subsection{New Definition of Duality Index}
\noindent
This paper is going to investigate  a portfolio selection problem under the {\dindex}.
Portfolio selection means finding the best possible outcome in a certain set under a certain meaning, in the current setting, that is related to the {\dindex}.
Every outcome with a mean (which may not be finite) will be considered. The definition of the {\dindex} in Aumann and Serrano (2008) cannot be employed, because the outcomes considered there only take finitely many values, and the index is not well-defined for many of  which need to be addressed here. The details will be stated after Example \ref{ex:definition} below.
Therefore, it is necessary to extend the definition of {\dindex} to cover general outcomes.
\par
Throughout this paper, the term market refers to a given probability space $(\Omega, \mathcal{F},\BP)$, and outcomes refer to random variables with well-defined finite or infinity mean values in the market. Denote the set of all outcomes as $$\setl\defin\{X:\textrm{ $X$ is a real-valued $\mathcal{F}$-measurable random variable, }\BE[X^-]<+\infty \textrm{ or }\BE[X^+]<+\infty \},$$
where $X^-=\max\{-X,0\}$ and $X^+=\max\{X,0\}$ stand for the loss part and gain part of an outcome $X$, respectively, and $\BE$ stands for the mathematical expectation on the given probability space $(\Omega, \mathcal{F},\BP)$.
\par
Define the set of outcomes with finite moment generating functions on loss as
\begin{align}\label{mgf1}
\mgf \defin \{X\in\setl:\BE[e^{ \varepsilon X^-}]<+\infty \textrm{ for some scalar $ \varepsilon>0$}\}.
\end{align}
It is clear that every moment of loss $X^-$ is finite when $X\in\mgf$ and every lower bounded outcome belongs to $\mgf$. In particular, all the gambles considered in Aumann and Serrano (2008) belong to $\mgf$.
\par
We first present several useful results.\footnote{Most of proofs in this paper are given in the Appendix.}
\begin{lemma}\label{losspress}
Fix a scalar $\varepsilon>0$. Then $\BE[e^{ -\varepsilon X}]<+\infty$ if and only if $\BE[e^{ \varepsilon X^-}]<+\infty$.
\end{lemma}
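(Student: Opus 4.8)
The plan is to reduce the claim to an elementary pointwise comparison of the two integrands, after which taking expectations settles both implications at once. The starting point is the decomposition $X = X^+ - X^-$, so that $-X = X^- - X^+$, which lets me factor the exponential as $e^{-\varepsilon X} = e^{\varepsilon X^-}\,e^{-\varepsilon X^+}$. Since $X^+ \geq 0$ and $\varepsilon > 0$, the factor $e^{-\varepsilon X^+}$ always lies in $(0,1]$, and this alone delivers the upper bound $e^{-\varepsilon X} \leq e^{\varepsilon X^-}$ everywhere.

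For the matching lower bound I would argue by cases. On the event $\{X < 0\}$ one has $X^+ = 0$, whence $e^{-\varepsilon X} = e^{\varepsilon X^-} \geq e^{\varepsilon X^-} - 1$; on the event $\{X \geq 0\}$ one has $X^- = 0$, whence $e^{\varepsilon X^-} - 1 = 0 \leq e^{-\varepsilon X}$. Combining the two cases yields the pointwise sandwich
$$e^{\varepsilon X^-} - 1 \;\leq\; e^{-\varepsilon X} \;\leq\; e^{\varepsilon X^-} \qquad \textrm{a.s.}$$

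Taking expectations is legitimate because all three quantities are nonnegative, so the integrals are well defined in $[0,+\infty]$, and this produces
$$\BE\!\left[e^{\varepsilon X^-}\right] - 1 \;\leq\; \BE\!\left[e^{-\varepsilon X}\right] \;\leq\; \BE\!\left[e^{\varepsilon X^-}\right].$$
From this chain the equivalence is immediate: the right-hand inequality shows that $\BE[e^{\varepsilon X^-}] < +\infty$ forces $\BE[e^{-\varepsilon X}] < +\infty$, and the left-hand inequality shows the converse.

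I do not anticipate a genuine obstacle here; the only points requiring care are the case split that defines the lower bound and the observation that the additive constant $1$ is harmless, precisely because a finite shift can neither turn a finite expectation into an infinite one nor the reverse. It is worth noting that the argument uses only measurability and nonnegativity of the integrands, so it does not even require $X \in \setl$.
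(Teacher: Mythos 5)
Your proof is correct and complete. The factorization $e^{-\varepsilon X}=e^{\varepsilon X^-}e^{-\varepsilon X^+}$, the two-case pointwise sandwich $e^{\varepsilon X^-}-1\leqslant e^{-\varepsilon X}\leqslant e^{\varepsilon X^-}$, and the passage to expectations are all valid: every integrand is nonnegative, so each expectation is well defined in $[0,+\infty]$, and the additive constant $1$ cannot affect finiteness (for $Z\geqslant 1$ one has $\BE[Z-1]=\BE[Z]-1$ in the extended reals). There is, however, nothing in the paper to compare against: Lemma \ref{losspress} is one of the few statements whose proof does not appear in the appendix, the author evidently regarding it as elementary, so your argument supplies exactly the verification the paper leaves to the reader. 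Your closing remark is also accurate --- the argument uses only measurability and nonnegativity of the integrands, so the hypothesis $X\in\setl$ plays no role in this particular equivalence.
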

\begin{coro}\label{losspress2}
If $\BE[e^{ -\varepsilon X}]<+\infty$ for some scalar $ \varepsilon>0$, then $\BE[e^{ -\alpha X}]<+\infty$ whenever $0\leqslant \alpha\leqslant\varepsilon$.
\end{coro}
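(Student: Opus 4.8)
The plan is to reduce the statement to a single pointwise inequality bounding $e^{-\alpha X}$ by an integrable quantity built from $e^{-\varepsilon X}$, after which the conclusion follows by monotonicity of the expectation. The boundary cases are immediate: $\alpha=0$ gives $\BE[e^{0}]=1<+\infty$, and $\alpha=\varepsilon$ is the hypothesis, so I would concentrate on $0<\alpha<\varepsilon$.

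My primary route would exploit convexity of the exponential. I would set $\theta\defin\alpha/\varepsilon\in(0,1)$ and write $-\alpha X=\theta\,(-\varepsilon X)+(1-\theta)\cdot 0$. Since $y\mapsto e^{y}$ is convex, this yields the pointwise bound $e^{-\alpha X}\leqslant\theta\,e^{-\varepsilon X}+(1-\theta)$. Taking expectations and invoking the hypothesis $\BE[e^{-\varepsilon X}]<+\infty$ then gives $\BE[e^{-\alpha X}]\leqslant\theta\,\BE[e^{-\varepsilon X}]+(1-\theta)<+\infty$, which is exactly the claim.

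As an alternative I would split on the sign of $X$: on $\{X\geqslant 0\}$ one has $-\alpha X\leqslant 0$, hence $e^{-\alpha X}\leqslant 1$; on $\{X<0\}$ one has $-X>0$ together with $\alpha\leqslant\varepsilon$, so $-\alpha X\leqslant-\varepsilon X$ and therefore $e^{-\alpha X}\leqslant e^{-\varepsilon X}$. Combining the two regions gives $e^{-\alpha X}\leqslant 1+e^{-\varepsilon X}$ everywhere, and integrating closes the argument.

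There is no genuine analytic difficulty here. The only step demanding attention is the direction of the inequality in the splitting argument, which relies on $\alpha\leqslant\varepsilon$ and the sign of $X$ acting jointly: precisely on the loss region $-X\geqslant 0$, so the smaller multiplier $\alpha$ produces the smaller exponent. In the convexity route the same content surfaces as the requirement $\theta\in[0,1]$, which is exactly the hypothesis $0\leqslant\alpha\leqslant\varepsilon$. Either way the integrability of $e^{-\varepsilon X}$ transfers down to $e^{-\alpha X}$ for every $\alpha\in[0,\varepsilon]$, and Lemma~\ref{losspress} permits restating the conclusion in terms of $X^{-}$ if desired.
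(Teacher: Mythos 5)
Your proposal is correct. The paper gives no explicit proof of this corollary; it is left implicit, and the intended argument is the one suggested by its status as a corollary of Lemma~\ref{losspress}: since $X^-\geqslant 0$ and $0\leqslant\alpha\leqslant\varepsilon$, one has $e^{\alpha X^-}\leqslant e^{\varepsilon X^-}$ pointwise, so finiteness of $\BE[e^{\varepsilon X^-}]$ (equivalent to the hypothesis by the lemma) passes down to $\BE[e^{\alpha X^-}]$, which converts back to $\BE[e^{-\alpha X}]<+\infty$ by the lemma again (the case $\alpha=0$ being trivial). Your sign-splitting argument is exactly this idea inlined, avoiding the back-and-forth through $X^-$; your convexity argument, $e^{-\alpha X}\leqslant\theta\,e^{-\varepsilon X}+(1-\theta)$ with $\theta=\alpha/\varepsilon$, is a genuinely self-contained one-line alternative that never touches Lemma~\ref{losspress} at all. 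All three are pointwise domination arguments of the same difficulty; the convexity route has the small advantage of yielding a single global inequality with no case split, while the splitting route makes transparent why the hypothesis $\alpha\leqslant\varepsilon$ only does work on the loss region $\{X<0\}$, which is precisely the content the paper packages into Lemma~\ref{losspress}.
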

\begin{coro}\label{mgfpress}
 The set $\mgf$ can be expressed as
\begin{align}\label{mgf2}
\mgf = \{X\in\setl:\BE[e^{ -\varepsilon X}]<+\infty \textrm{ for some scalar $ \varepsilon>0$}\}
\end{align}
and
\begin{align}\label{mgf3}
\mgf &= \{X\in\setl:\textrm{ there is $ \varepsilon>0$ such that $\BE[e^{ -\alpha X}]<+\infty$ whenever $0\leqslant \alpha \leqslant \varepsilon$ }\}.
\end{align}
Moreover, $\mgf$ is a convex set.
\end{coro}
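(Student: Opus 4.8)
The plan is to obtain the two representations (\ref{mgf2}) and (\ref{mgf3}) directly from the preceding results, and then to establish convexity through a Hölder estimate. The first two equalities are essentially immediate. For (\ref{mgf2}) I would invoke Lemma \ref{losspress}: for each fixed $\varepsilon>0$ the condition $\BE[e^{\varepsilon X^-}]<+\infty$ is equivalent to $\BE[e^{-\varepsilon X}]<+\infty$, so quantifying over all $\varepsilon>0$ shows that the defining condition in (\ref{mgf1}) and the condition in (\ref{mgf2}) single out exactly the same subset of $\setl$. For (\ref{mgf3}), the inclusion $\supseteq$ is trivial (take $\alpha=\varepsilon$), while the inclusion $\subseteq$ is precisely the content of Corollary \ref{losspress2}: if $\BE[e^{-\varepsilon X}]<+\infty$ for some $\varepsilon>0$, then $\BE[e^{-\alpha X}]<+\infty$ for every $0\leqslant\alpha\leqslant\varepsilon$. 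Hence all three descriptions coincide.

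The substantive part is convexity. Let $X,Y\in\mgf$ and $\lambda\in(0,1)$ (the endpoints $\lambda\in\{0,1\}$ being trivial), and set $Z=\lambda X+(1-\lambda)Y$, which is a real-valued random variable. Using the representation (\ref{mgf2}) I would pick $\varepsilon_1,\varepsilon_2>0$ with $\BE[e^{-\varepsilon_1 X}]<+\infty$ and $\BE[e^{-\varepsilon_2 Y}]<+\infty$, and put $\varepsilon=\min\{\varepsilon_1,\varepsilon_2\}>0$, so that by Corollary \ref{losspress2} both $\BE[e^{-\varepsilon X}]$ and $\BE[e^{-\varepsilon Y}]$ are finite. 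Then I would apply Hölder's inequality with conjugate exponents $p=1/\lambda$ and $q=1/(1-\lambda)$ to the factorization $e^{-\varepsilon Z}=e^{-\varepsilon\lambda X}\,e^{-\varepsilon(1-\lambda)Y}$, obtaining
$$\BE[e^{-\varepsilon Z}]\leqslant\big(\BE[e^{-\varepsilon X}]\big)^{\lambda}\big(\BE[e^{-\varepsilon Y}]\big)^{1-\lambda}<+\infty.$$
Finally I would confirm that $Z$ genuinely lies in $\setl$: by Lemma \ref{losspress} the finiteness of $\BE[e^{-\varepsilon Z}]$ gives $\BE[e^{\varepsilon Z^-}]<+\infty$, whence $\BE[Z^-]<+\infty$, so $Z\in\setl$, and the representation (\ref{mgf2}) then yields $Z\in\mgf$.

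The only delicate point is the matching of exponents in Hölder's inequality to the convex weights $\lambda$ and $1-\lambda$, chosen precisely so that the powers of $\varepsilon$ inside the two expectations collapse back to $\varepsilon$; combining this with the common $\varepsilon=\min\{\varepsilon_1,\varepsilon_2\}$ and Corollary \ref{losspress2} is what keeps both resulting factors finite simultaneously. Everything else is routine bookkeeping.
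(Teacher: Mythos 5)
Your proof is correct. Note that the paper itself gives no proof of this corollary (it is one of the statements whose verification is left implicit, resting on Lemma \ref{losspress} and Corollary \ref{losspress2}), so there is nothing to match line by line; your handling of the two set identities --- Lemma \ref{losspress} applied pointwise in $\varepsilon$ for \eqref{mgf2}, and Corollary \ref{losspress2} plus the trivial choice $\alpha=\varepsilon$ for \eqref{mgf3} --- is exactly the intended reduction. For convexity you reach for H\"older's inequality with exponents $1/\lambda$ and $1/(1-\lambda)$, which works, and you correctly close the loop by checking $Z\in\setl$ via $\BE[Z^-]\leqslant \varepsilon^{-1}\BE[e^{\varepsilon Z^-}]<+\infty$ before invoking \eqref{mgf2}. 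A slightly lighter route, and the one consistent with the paper's own style (see the subadditivity argument in the proof of Proposition \ref{indexproperties}), is to use the pointwise convexity of the exponential:
\begin{align*}
e^{-\varepsilon(\lambda X+(1-\lambda)Y)}\leqslant \lambda e^{-\varepsilon X}+(1-\lambda)e^{-\varepsilon Y},
\end{align*}
which after taking expectations gives finiteness of $\BE[e^{-\varepsilon Z}]$ directly, with no integrability hypotheses to check for H\"older. The two bounds differ only in form (geometric versus arithmetic mean of the two finite quantities); both suffice, and your choice of the common $\varepsilon=\min\{\varepsilon_1,\varepsilon_2\}$ via Corollary \ref{losspress2} is needed in either version.
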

\par
Each outcome is classified into one of the following categories:
\begin{align*}
\seta & \defin \{X\in \mgf :\BE[X^+]>\BE[X^-]>0 \;\},\\
\setb & \defin \{X\in\setl :\BE[X^-]=0\;\},\\
\setc & \defin \{X\in\setl :\BE[X^-]\geqslant \BE[X^+], \;\BE[X^-]>0 \;\},\\
\setd & \defin \{X\in\setl : X\notin \seta\cup \setb\cup\setc \;\}.
\end{align*}
Note that the set $\seta\cup \setb= \{X\in \mgf :\;\BE[X]>0\;\}\cup\{0\}$ is convex. This simplifies our analysis.
\par
Before formally defining the {\dindex}, we need a very important lemma, {\blemma}, which will be used frequently in the following analysis.
\begin{lemma}[{\blemma}]
Let $\hat{\alpha}=\sup\{\alpha\geqslant 0: \BE[e^{-\alpha X}]\leqslant 1\}$ for each $X\in\setl$. Then the mapping $\alpha\mapsto \BE[e^{-{\alpha} X}]$ is continuous on $[0, \hat{\alpha}]$ and $\BE[e^{-\hat{\alpha} X}]\leqslant 1$. If $\BP(X\neq 0)>0$, then $\BE[e^{-\alpha X}]<1$ whenever $0<\alpha<\hat{\alpha} $, and $\BE[e^{-\alpha X}]>1$ whenever $\alpha>\hat{\alpha} $. Moreover, $0<\hat{\alpha}<+\infty$ whenever $X\in\seta$, $\hat{\alpha}=+\infty$ whenever $X\in\setb$, and $ \hat{\alpha}=0$ whenever $X\in\setc\cup \setd $.
\end{lemma}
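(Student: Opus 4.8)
The plan is to route every assertion through the single function $\phi(\alpha)\defin\BE[e^{-\alpha X}]$, regarded as an extended-real-valued map on $[0,+\infty)$ with $\phi(0)=1$, and to extract the statement from its convexity together with the monotone and dominated convergence theorems. First I would record three structural facts. (i) $\phi$ is convex, since $\alpha\mapsto e^{-\alpha X(\omega)}$ is convex for each fixed $\omega$ and expectation preserves convexity; moreover it is \emph{strictly} convex whenever $\BP(X\neq 0)>0$, because the pointwise convexity inequality is then strict on a set of positive probability. (ii) By Corollary \ref{losspress2} the finiteness set $\{\alpha\geq 0:\phi(\alpha)<+\infty\}$ is an interval with left endpoint $0$, and by Corollary \ref{mgfpress} it is nondegenerate exactly when $X\in\mgf$. (iii) $\phi$ is lower semicontinuous by Fatou's lemma. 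From convexity, the sublevel set $\{\alpha\geq 0:\phi(\alpha)\leq 1\}$ is an interval containing $0$ whose supremum is precisely $\hat\alpha$; applying Fatou along any sequence $\alpha_n\uparrow\hat\alpha$ with $\phi(\alpha_n)\leq 1$ gives $\BE[e^{-\hat\alpha X}]\leq 1$, so (for finite $\hat\alpha$) this sublevel set is the closed interval $[0,\hat\alpha]$.

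For continuity on $[0,\hat\alpha]$ I would split $e^{-\alpha X}=e^{-\alpha X}\mathbf 1_{\{X\geq 0\}}+e^{-\alpha X}\mathbf 1_{\{X<0\}}$: on $\{X\geq 0\}$ the integrand is bounded by $1$ (dominated convergence), while on $\{X<0\}$ it is monotone in $\alpha$ and dominated by $e^{\hat\alpha X^-}$, which is integrable because $\phi(\hat\alpha)\leq 1<+\infty$ and hence $\BE[e^{\hat\alpha X^-}]<+\infty$ by Lemma \ref{losspress}; therefore $\alpha\mapsto\phi(\alpha)$ is continuous on $[0,\hat\alpha]$. The strict inequalities then follow from strict convexity under $\BP(X\neq 0)>0$: for $\alpha>\hat\alpha$ we have $\phi(\alpha)>1$ because $\hat\alpha$ is the supremum of the convex sublevel set; for $0<\alpha<\hat\alpha$ I would write $\alpha$ as a genuine convex combination $\alpha=(1-t)\cdot 0+t\,\alpha'$ with $\alpha'\in(\alpha,\hat\alpha]$ and conclude $\phi(\alpha)<(1-t)\phi(0)+t\,\phi(\alpha')\leq 1$.

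It remains to locate $\hat\alpha$ in each class, which I would do by sign analysis of the one-sided slope $\phi'_+(0)=-\BE[X]$. For $X\in\setb$ we have $X\geq 0$ a.s., so $e^{-\alpha X}\leq 1$ and $\phi\leq 1$ on all of $[0,+\infty)$, giving $\hat\alpha=+\infty$. For $X\in\setd$ we have $X\notin\mgf$, so $\phi(\alpha)=+\infty>1$ for every $\alpha>0$ by Corollary \ref{mgfpress}, whence $\hat\alpha=0$. For $X\in\setc$, either $X\notin\mgf$ (again $\hat\alpha=0$), or $X\in\mgf$ and then $\BE[X]=\BE[X^+]-\BE[X^-]\leq 0$; if $\BE[X]<0$ the slope $\phi'_+(0)=-\BE[X]>0$ forces $\phi(\alpha)\geq 1+\phi'_+(0)\,\alpha>1$, while if $\BE[X]=0$ strict convexity makes $0$ a strict minimizer, so in both subcases $\hat\alpha=0$. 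Finally for $X\in\seta$ the hypothesis $\BE[X^+]>\BE[X^-]>0$ gives $\phi'_+(0)=-\BE[X]<0$, so $\phi(\alpha)<1$ for small $\alpha>0$ and $\hat\alpha>0$; and since $\BE[X^-]>0$ there is $\delta>0$ with $\BP(X\leq-\delta)>0$, so $\phi(\alpha)\geq e^{\alpha\delta}\,\BP(X\leq-\delta)\to+\infty$, forcing $\hat\alpha<+\infty$.

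The hard part will be the rigorous evaluation of the initial slope $\phi'_+(0)=-\BE[X]$ and its use in the borderline classes. The difficulty is twofold. One must justify interchanging the limit $\alpha\downarrow 0$ with the expectation even when $\BE[X^+]=+\infty$ (so that the value $\phi'_+(0)=-\infty$ is allowed), which I would handle by applying the monotone convergence theorem separately on $\{X\geq 0\}$ and on $\{X<0\}$, after checking that the difference quotient $\alpha^{-1}(1-e^{-\alpha X})$ is monotone in $\alpha$ on each piece and, on $\{X<0\}$, bounded below there by an integrable function (using $X\in\mgf$). And in the knife-edge case $\BE[X]=0$ of class $\setc$ the slope alone is inconclusive, so one must fall back on the strict convexity established in step (i); keeping these two mechanisms properly separated is the delicate point of the argument.
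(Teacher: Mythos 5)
Your proof is correct, and its skeleton is the same as the paper's: everything is routed through convexity of $f(\alpha)=\BE[e^{-\alpha X}]$, strict convexity when $\BP(X\neq 0)>0$ for the strict inequalities, convergence theorems at $\hat\alpha$, and the sign of the initial slope $-\BE[X]$ for the classification into $\seta,\setb,\setc,\setd$. The differences are in execution, and in three places your version is tighter than the paper's. First, for $\BE[e^{-\hat\alpha X}]\leqslant 1$ the paper computes the exact left limit by splitting into a monotone-convergence part on $\{X<0\}$ and a dominated-convergence part on $\{X\geqslant 0\}$; your Fatou argument along a sequence $\alpha_n\uparrow\hat\alpha$ with $f(\alpha_n)\leqslant 1$ gets the needed inequality in one line. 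Second, for continuity on $[0,\hat\alpha]$ the paper pieces together interior continuity (from convexity) with one-sided limits at the endpoints; your single domination $e^{-\alpha X}\leqslant 1+e^{\hat\alpha X^-}$, integrable by Lemma \ref{losspress} because $f(\hat\alpha)\leqslant 1<+\infty$, gives continuity on the whole closed interval in one application of dominated convergence. Third, the paper simply asserts $f'(0+)=-\BE[X]$ (and, for $\seta$, that $\hat\alpha>0$ ``by the definition of $\mgf$''), whereas you correctly flag the interchange of limit and expectation as the delicate step and supply the right mechanism: monotonicity in $\alpha$ of the difference quotient $\alpha^{-1}(1-e^{-\alpha X})$, handled separately on $\{X\geqslant 0\}$ and $\{X<0\}$, the latter using $X\in\mgf$ for an integrable bound. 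Finally, in the $\setd$ case the paper claims $\BE[X^-]=+\infty$, which does not actually follow from $X\in\setd$ (take $X^-$ with tails like $e^{-\sqrt{x}}$: finite mean, no finite exponential moment); your route --- $X\in\setd$ forces $X\notin\mgf$, hence $f\equiv+\infty$ on $(0,\infty)$ by Corollary \ref{mgfpress} and $\hat\alpha=0$ --- is the correct argument, and is what the paper's own conclusion really rests on.
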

It is very important to notice that $\BE[e^{-\hat{\alpha} X}]<1$ may  happen in {\blemma}.\footnote{In fact, $\BE[e^{-\hat{\alpha} X}]$ may take any value between 0 and 1.}
\begin{example}\label{ex:definition}
Let $X$ be a discrete outcome with distribution $\BP(X=-n)=n^{-2}e^{-3n-3}$ for each positive integer $n$, $\BP(X=3)=1-\sum_{n=1}^{\infty} n^{-2}e^{-3n-3}.$ Then $X\in\seta$, $\hat{\alpha}=3$, and $\BE[e^{-\alpha X}]<1$ whenever $0<\alpha\leqslant 3$ and $\BE[e^{-\alpha X}]=+\infty$ whenever $\alpha>3$.
\end{example}
It is not hard to prove that $\BE[e^{-\hat{\alpha} X}]=1$ holds in {\blemma} if and only if there is $ \varepsilon>0$ such that $1\leqslant \BE[e^{ -\varepsilon X}]<+\infty$ (for example, $X\in\seta$ is lower bounded).
This allows Aumann and Serrano (2008) to define the {\dindex} as the unique positive root of $\BE[e^{- X/{R}}]=1$.\footnote{When $X$ is a gamble defined in Aumann and Serrano (2008), the equation $\BE[e^{- X/{R}}]=1$ admits a unique positive root.} However,  as noted above, $\BE[e^{- X/{R}}]=1$ may not admit a positive solution, so we have to change the definition of the {\dindex}.
\par
Now, we define the $\emph{{\dindex}} $ for every outcome in $\setl$.
\begin{defn}\label{define:uindex}
 {{\dindex}} of $X\in\setl $ is defined as $\hat{\alpha}^{-1}$, where $\hat{\alpha}=\sup\{\alpha\geqslant 0: \BE[e^{-\alpha X}]\leqslant 1\}$.\footnote{Here $0^{-1} $ stands for $+\infty$. If the definition of {{\dindex}} is replaced by $\hat{\alpha}'^{-1}$, where $\hat{\alpha}'=\sup\{\alpha\geqslant 0: \BE[e^{-\alpha X}]< 1\}$. Then no agent would accept 0 since its {{\dindex}} would be $+\infty$. This would be consistent with the definition of acceptable in Aumann and Serrano (2008). We think, however, it would contradict the financial intuition that no loss is always an acceptable situation, so we do not adopt this definition. }
\end{defn}
This definition coincides with the original {\dindex} presented for every outcome discussed in Aumann and Serrano (2008). Denote by $\Rin(X)$ the {{\dindex}} of $X $.
\par
Let us look at the {{\dindex}} for outcome in the different categories.
\begin{itemize}
 \item If $X\in \seta $, then $0<\Rin(X)<+\infty$. The risk is moderate. Whether this outcome is accepted by an agent depends on his/her risk tolerance.
 \item If $X\in \setb $, then $\hat{\alpha}=+\infty$ and $\Rin(X)=0$. There is no risk at all. It is consistent with financial intuition. Any agent will accept this outcome because there is no potential loss.
 \item If $X\in\setc$, then $\hat{\alpha}=0$ and $\Rin(X)=+\infty$. The risk is intolerable.
 It is also in conformity financial intuition because no risk-averse agent will accept this outcome.
 \item If $X\in\setd$, then $\hat{\alpha}=0$ and $\Rin(X)=+\infty$. This risk is also intolerable.
 Because $\setd\subseteq\{X\in\setl : \BE[e^{ \varepsilon X^-}]=+\infty \textrm{ for every $ \varepsilon>0$}\;\}$, the loss is too large.
\end{itemize}
Therefore, it can be concluded that$\Rin(X)$ is finite if and only if $X$ belongs to $\seta\cup\setb$.
\par
Several essential properties of the {\dindex} are listed as follows.
\begin{prop}\label{indexproperties}
The {\dindex} has the following properties.
\begin{enumerate}
 \item The {\dindex} is subadditive: $\Rin(X+Y)\leqslant \Rin(X)+\Rin(Y)$ for any $X, Y$.
 \item The {\dindex} is positive homogeneous: $\Rin(k X)=k \Rin(X)$ for any scalar $k>0$ and $X$.
 \item The {\dindex} is convex.
 \item The {\dindex} is nonnegative and law-invariant.
 \item The {\dindex} is monotonically decreasing w.r.t. the first- and second-order (stochastic) dominance.
\end{enumerate}
\end{prop}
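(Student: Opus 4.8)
The plan is to reduce all five assertions to elementary properties of the map $\alpha\mapsto\BE[e^{-\alpha X}]$ together with the information supplied by the {\blemma}, namely its continuity on $[0,\hat{\alpha}]$ and the endpoint bound $\BE[e^{-\hat{\alpha}X}]\leqslant 1$. Throughout I write $\hat{\alpha}(X)=\sup\{\alpha\geqslant 0:\BE[e^{-\alpha X}]\leqslant 1\}$ and $\Rin(X)=\hat{\alpha}(X)^{-1}$, with the conventions $0^{-1}=+\infty$ and $(+\infty)^{-1}=0$. I would dispose of the easy items first. Positive homogeneity (item 2) follows by the substitution $\beta=k\alpha$ in the defining set: for $k>0$ one gets $\hat{\alpha}(kX)=k^{-1}\hat{\alpha}(X)$, hence $\Rin(kX)=k\,\Rin(X)$. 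Nonnegativity (part of item 4) is immediate because $\alpha=0$ always satisfies $\BE[e^{0}]=1\leqslant 1$, so the defining set is nonempty and $\hat{\alpha}(X)\geqslant 0$, giving $\Rin(X)\in[0,+\infty]$. Law-invariance holds because $\BE[e^{-\alpha X}]$ is a functional of the law of $X$ alone, so identically distributed outcomes share the same $\hat{\alpha}$ and hence the same index.

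The substantive step is subadditivity (item 1). Writing $a=\hat{\alpha}(X)$ and $b=\hat{\alpha}(Y)$, the claim $\Rin(X+Y)\leqslant\Rin(X)+\Rin(Y)$ is equivalent to $\hat{\alpha}(X+Y)\geqslant ab/(a+b)$, a harmonic-type combination. If $a=0$ or $b=0$ the right-hand side is $0$ and there is nothing to prove; if one of them is $+\infty$, say $X\in\setb$ so that $X\geqslant 0$ almost surely, then $e^{-\gamma X}\leqslant 1$ pointwise gives $\BE[e^{-\gamma(X+Y)}]\leqslant\BE[e^{-\gamma Y}]\leqslant 1$ for every $\gamma\leqslant b$, whence $\hat{\alpha}(X+Y)\geqslant b$ and the inequality follows. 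In the generic case $0<a,b<+\infty$ I would set $\gamma=ab/(a+b)$ and apply H\"older's inequality to the factorisation $e^{-\gamma(X+Y)}=e^{-\gamma X}\,e^{-\gamma Y}$ with conjugate exponents $p=a/\gamma$ and $q=b/\gamma$; these are genuinely conjugate since the choice of $\gamma$ forces $p^{-1}+q^{-1}=1$. This yields $\BE[e^{-\gamma(X+Y)}]\leqslant\BE[e^{-aX}]^{1/p}\,\BE[e^{-bY}]^{1/q}$, and the {\blemma} guarantees $\BE[e^{-aX}]\leqslant 1$ and $\BE[e^{-bY}]\leqslant 1$, so the product is at most $1$. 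Therefore $\gamma$ lies in the defining set of $X+Y$, giving $\hat{\alpha}(X+Y)\geqslant\gamma$, which is exactly the desired bound.

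The remaining items are formal consequences. Convexity (item 3) follows from subadditivity and positive homogeneity: for $\lambda\in(0,1)$ one has $\Rin(\lambda X+(1-\lambda)Y)\leqslant\Rin(\lambda X)+\Rin((1-\lambda)Y)=\lambda\,\Rin(X)+(1-\lambda)\,\Rin(Y)$, the endpoints being trivial. For monotonicity (item 5) the key observation is that $x\mapsto e^{-\alpha x}$ is both decreasing and convex for every $\alpha\geqslant 0$. If $X$ first-order dominates $Y$, then $\BE[e^{-\alpha X}]\leqslant\BE[e^{-\alpha Y}]$ because the test function is decreasing; if $X$ second-order stochastically dominates $Y$, the same inequality follows from the characterisation recalled in the footnote applied to the decreasing convex test function $e^{-\alpha\,\cdot}$. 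In either situation the defining set of $Y$ is contained in that of $X$, so $\hat{\alpha}(X)\geqslant\hat{\alpha}(Y)$ and hence $\Rin(X)\leqslant\Rin(Y)$, i.e.\ the index decreases along dominance.

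The only delicate point is the subadditivity argument, and within it the handling of the boundary cases $\hat{\alpha}\in\{0,+\infty\}$ together with the need to invoke the endpoint value $\BE[e^{-\hat{\alpha}X}]\leqslant 1$ from the {\blemma} rather than an interior strict inequality; everything else is bookkeeping. I therefore expect the harmonic-combination identity and the conjugate-exponent choice in H\"older's inequality to be the crux of the proof.
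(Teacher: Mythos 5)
Your proof is correct and takes essentially the same route as the paper's: both reduce subadditivity to showing that the harmonic combination $\gamma=\hat{\alpha}(X)\hat{\alpha}(Y)/(\hat{\alpha}(X)+\hat{\alpha}(Y))$ lies in the defining set of $X+Y$ using the endpoint bound $\BE[e^{-\hat{\alpha}X}]\leqslant 1$ from the {\blemma}, and both obtain homogeneity, convexity and monotonicity exactly as you do. The only cosmetic difference is in that key step: where you invoke H\"older's inequality with conjugate exponents $p=(a+b)/b$, $q=(a+b)/a$, the paper uses the pointwise convexity of the exponential, $e^{-kaX-(1-k)bY}\leqslant ke^{-aX}+(1-k)e^{-bY}$ with $k=b/(a+b)$, which is the same estimate via weighted AM--GM; the paper also dispatches the case $Y\in\setb$ by appealing to first-order monotonicity rather than your equivalent direct argument.
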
 

\subsection{Characterization and Uniqueness}
\noindent
In this section, we show that the {\dindex} satisfies the duality and positive homogeneity axioms, which is unique.
\par
We need a technical result as follows. It shows that the {\dindex} is continuous under the meaning given below.
\begin{lemma}\label{lemma:chara}
Let $u(\cdot)$ be an increasing function and $X\in\setl$ satisfy $ u(w+X)\in\setl$ at some wealth $w$. Let $X_n=\min\{\max\{X,-n\},n\}$. Then $\lim\limits_{n\to+\infty } \Rin(X_n)=\Rin(X)$ and $\lim\limits_{n\to+\infty } \BE[u(w+X_n)]=\BE[u(w+X)]$.
\end{lemma}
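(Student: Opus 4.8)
The plan is to prove the two limits separately; in both cases the key device is to split the sample space into $\{X\ge 0\}$ and $\{X<0\}$, on which the truncation $X_n=\min\{\max\{X,-n\},n\}$ is monotone in $n$, and then pass to the limit.

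For the first limit, write $\Rin(X_n)=\hat\alpha_n^{-1}$ with $\hat\alpha_n=\sup\{\alpha\ge 0:\BE[e^{-\alpha X_n}]\le 1\}$, and set $\phi_n(\alpha)=\BE[e^{-\alpha X_n}]$ and $\phi(\alpha)=\BE[e^{-\alpha X}]$. Since $t\mapsto t^{-1}$ is continuous on $[0,+\infty]$ (with $0^{-1}=+\infty$ and $(+\infty)^{-1}=0$), it suffices to show $\hat\alpha_n\to\hat\alpha$. First I would establish the pointwise convergence $\phi_n(\alpha)\to\phi(\alpha)$ for every fixed $\alpha\ge 0$: on $\{X\ge 0\}$ one has $X_n=\min\{X,n\}\uparrow X$, so $e^{-\alpha X_n}\downarrow e^{-\alpha X}$ with $0\le e^{-\alpha X_n}\le 1$, and dominated convergence applies; on $\{X<0\}$ one has $X_n=\max\{X,-n\}\downarrow X$, so $e^{-\alpha X_n}\uparrow e^{-\alpha X}$ and monotone convergence applies. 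Adding the two contributions gives $\phi_n(\alpha)\to\phi(\alpha)$ in $(0,+\infty]$.

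The decisive step is to convert this pointwise convergence into convergence of the thresholds, and this is where the {\blemma} is essential (pointwise convergence of a family together with the sup of its sub-unit level set need not, by itself, give convergence of that sup). Assume first $\BP(X\ne 0)>0$; the case $X=0$ a.s. is trivial since then $X_n=0$. For any $\alpha$ with $0<\alpha<\hat\alpha$ the {\blemma} gives $\phi(\alpha)<1$ \emph{strictly}, hence $\phi_n(\alpha)<1$ for all large $n$, so $\hat\alpha_n\ge\alpha$; letting $\alpha\uparrow\hat\alpha$ yields $\liminf_n\hat\alpha_n\ge\hat\alpha$. Conversely, for any $\alpha>\hat\alpha$ the {\blemma} gives $\phi(\alpha)>1$, hence $\phi_n(\alpha)>1$ for all large $n$; applying the {\blemma} to $X_n$ (bounded, hence in $\setl$, with $\BP(X_n\ne 0)>0$ for large $n$) then forces $\hat\alpha_n<\alpha$, so $\limsup_n\hat\alpha_n\le\alpha$, and letting $\alpha\downarrow\hat\alpha$ gives $\limsup_n\hat\alpha_n\le\hat\alpha$. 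Combining the two bounds proves $\hat\alpha_n\to\hat\alpha$. The degenerate thresholds are covered automatically: if $X\in\setb$ then $\hat\alpha=+\infty$ and only the first bound is needed, while if $X\in\setc\cup\setd$ then $\hat\alpha=0$ and only the second is needed.

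For the second limit, the crucial simplification is that $X$ is real-valued, so for a.e. $\omega$ one has $X_n(\omega)=X(\omega)$ for all $n\ge |X(\omega)|$; hence $u(w+X_n)\to u(w+X)$ pointwise a.s. with no continuity hypothesis on $u$. To pass the limit through the expectation I would again split on $A=\{X\ge 0\}$ and $B=\{X<0\}$ and use the monotonicity of $u$: on $A$ the sequence $u(w+X_n)$ is nondecreasing in $n$ with $u(w)\le u(w+X_n)\le u(w+X)$, so monotone convergence gives $\BE[u(w+X_n)\mathbf{1}_{A}]\uparrow\BE[u(w+X)\mathbf{1}_{A}]$; on $B$ it is nonincreasing with $u(w+X)\le u(w+X_n)\le u(w)$, so $\BE[u(w+X_n)\mathbf{1}_{B}]\downarrow\BE[u(w+X)\mathbf{1}_{B}]$. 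Each $X_n$ being bounded, $u(w+X_n)$ is a bounded random variable and the two partial expectations are finite. \textbf{The one genuine obstacle} is that the two limits could a priori be $+\infty$ and $-\infty$, producing an undefined $\infty-\infty$; here the hypothesis $u(w+X)\in\setl$ resolves it, because $\BE[u(w+X)\mathbf{1}_{A}]=+\infty$ would force $\BE[(u(w+X))^+]=+\infty$ while $\BE[u(w+X)\mathbf{1}_{B}]=-\infty$ would force $\BE[(u(w+X))^-]=+\infty$, and these cannot both hold for an element of $\setl$. Therefore $\BE[u(w+X)]$ is a well-defined element of $[-\infty,+\infty]$ and $\BE[u(w+X_n)]=\BE[u(w+X_n)\mathbf{1}_{A}]+\BE[u(w+X_n)\mathbf{1}_{B}]\to\BE[u(w+X)]$, which completes the plan.
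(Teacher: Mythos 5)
Your proof is correct, and it follows the same basic route as the paper's own proof: split the sample space on the sign of $X$, use monotone/dominated convergence on each piece to get $\BE[e^{-\alpha X_n}]\to\BE[e^{-\alpha X}]$ for each fixed $\alpha\geqslant 0$, and likewise for $\BE[u(w+X_n)]$. Where you go beyond the paper is precisely the two places where the paper compresses the argument. First, the paper passes from pointwise convergence of the moment generating functions to $\Rin(X_n)\to\Rin(X)$ with only the words ``This confirms that''; you correctly observe that this implication is not automatic (pointwise convergence of a family does not by itself give convergence of the suprema of the sub-unit level sets) and you supply the missing $\liminf$/$\limsup$ argument, using the strict inequalities of the {\blemma} ($\BE[e^{-\alpha X}]<1$ strictly for $0<\alpha<\hat\alpha$ and $>1$ for $\alpha>\hat\alpha$) applied both to $X$ and to each truncation $X_n$, with the degenerate cases $X\in\setb$ and $X\in\setc\cup\setd$ handled separately. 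Second, in the utility limit the paper simply ``adds up'' the limits over $\{X>0\}$ and $\{X\leqslant 0\}$, which in general risks an undefined $\infty-\infty$; you resolve this explicitly via the hypothesis $u(w+X)\in\setl$, which forbids the two partial limits from being $+\infty$ and $-\infty$ simultaneously. Both additions are genuine gap-fills rather than detours, so your write-up is, if anything, a more complete version of the paper's own proof.
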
 
\begin{thm}\label{indexunique}
The {\dindex} defined in Definition \ref{define:uindex} is the unique, nontrivial, nonnegative real-valued index on $\setl$ that satisfies both the duality and positive homogeneity axioms.\footnote{Say that a nonnegative real-valued index is nontrivial, the index of an outcome is zero if and only if the outcome is accepted by all risk-averse agents. In fact, the latter means the outcome is nonnegative. }
\end{thm}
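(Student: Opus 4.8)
The plan is to verify the two axioms together with nontriviality for $\Rin$, and then prove uniqueness; throughout, the engine is that a CARA agent with absolute risk aversion $\alpha>0$ and utility $w\mapsto -e^{-\alpha w}$ accepts an outcome $X$ at \emph{every} wealth precisely when $\BE[e^{-\alpha X}]<1$. By the {\blemma}, for $X\in\seta\cup\setb$ this agent accepts $X$ if and only if $\alpha<\hat\alpha=\Rin(X)^{-1}$ (with the single boundary caveat, flagged below, that $\BE[e^{-\hat\alpha X}]<1$ may occur, as in Example~\ref{ex:definition}). Positive homogeneity is already recorded in Proposition~\ref{indexproperties}. For nontriviality, $\Rin(X)=0$ iff $\hat\alpha=+\infty$ iff $X\in\setb$ iff $X\geqslant 0$ a.s.; and $X\geqslant 0$ a.s. is exactly the class accepted by every risk-averse agent, since monotonicity of $u$ (with the standing convention that $0$ is accepted) gives acceptance when $X\geqslant 0$, whereas if $\BP(X<0)>0$ then $\BE[e^{-\alpha X}]\geqslant e^{\alpha\delta}\BP(X<-\delta)\to+\infty$ shows a sufficiently risk-averse CARA agent rejects $X$.

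The core is the duality axiom. Suppose $u_1$ is uniformly more risk-averse than $u_2$ and $u_1$ accepts $g$ at some wealth. The key step is to insert a separating CARA agent: the behavioural comparison forces $\inf_x(-u_1''/u_1')(x)\geqslant \sup_x(-u_2''/u_2')(x)$, so there is $\alpha>0$ lying between these bounds. Pratt's comparison then gives, on one side, that the less risk-averse CARA-$\alpha$ agent accepts $g$, whence $\BE[e^{-\alpha g}]<1$ and $\alpha\leqslant \Rin(g)^{-1}$; and, on the other side, that $u_2$ accepts everything CARA-$\alpha$ accepts. Now if $\Rin(h)<\Rin(g)$ then $\Rin(h)^{-1}>\Rin(g)^{-1}\geqslant\alpha$, so the {\blemma} yields $\BE[e^{-\alpha h}]<1$; thus CARA-$\alpha$ accepts $h$ and therefore so does $u_2$. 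This is the duality axiom.

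For uniqueness, let $I$ be any nonnegative, nontrivial, positively homogeneous index on $\setl$ obeying the duality axiom. I would first extract ordinal equivalence with $\Rin$ by applying the duality axiom to pairs of CARA agents: since CARA-$\alpha$ is uniformly more risk-averse than CARA-$\beta$ whenever $\alpha>\beta$, the axiom applied to $I$ says that if CARA-$\alpha$ accepts $g$ (i.e. $\alpha<\Rin(g)^{-1}$) then CARA-$\beta$ accepts every $h$ with $I(h)<I(g)$ (i.e. $\beta<\Rin(h)^{-1}$); letting $\beta\uparrow\alpha\uparrow\Rin(g)^{-1}$ gives $I(h)<I(g)\Rightarrow \Rin(h)\leqslant\Rin(g)$, and the symmetric argument gives $\Rin(h)<\Rin(g)\Rightarrow I(h)\leqslant I(g)$. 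Positive homogeneity and nontriviality upgrade the latter to strict: if $\Rin(h)<\Rin(g)$ yet $I(h)=I(g)$, then inserting a scaled copy $ch$ with $1<c<\Rin(g)/\Rin(h)$ forces $cI(h)=I(h)$, hence $I(h)=0$, hence $h\geqslant 0$ and $\Rin(h)=0$, and likewise $\Rin(g)=0$, contradicting $\Rin(h)<\Rin(g)$. Thus $I=\phi\circ\Rin$ for a strictly increasing $\phi$, and positive homogeneity forces $\phi(kt)=k\phi(t)$, i.e. $\phi(t)=c\,t$; so $I=c\,\Rin$, which is the asserted uniqueness up to the positive normalizing constant, as in Aumann and Serrano (2008).

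I expect the main obstacle to be the first link in the duality step, namely rigorously producing the separating CARA coefficient from the behavioural definition of ``uniformly more risk-averse.'' I would either invoke the Arrow--Pratt bound characterization directly (citing Aumann and Serrano (2008) and Pratt (1964)) or, to stay within the present $\setl$-framework, reduce to it by truncation: replacing outcomes by $X_n=\min\{\max\{X,-n\},n\}$ and passing to the limit via Lemma~\ref{lemma:chara}, which guarantees both $\Rin(X_n)\to\Rin(X)$ and $\BE[u(w+X_n)]\to\BE[u(w+X)]$, so that acceptance and index comparisons transfer from the bounded truncations (genuine gambles, where the Aumann--Serrano theory applies) to arbitrary outcomes. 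The remaining care is the boundary case $\BE[e^{-\hat\alpha X}]<1$ of the {\blemma}: it only enlarges the set of accepting CARA agents and is therefore absorbed harmlessly by the strict inequalities used above.
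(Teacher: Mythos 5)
Your proposal is correct in substance but takes a genuinely different route from the paper. The paper's proof never touches Arrow--Pratt coefficients: it observes that the index of Definition \ref{define:uindex} coincides with the Aumann--Serrano index on bounded outcomes, imports the duality axiom for those from Aumann and Serrano (2008) as a black box, and then extends to all of $\setl$ by the truncation $X_n=\min\{\max\{X,-n\},n\}$ together with Lemma \ref{lemma:chara}, using the shift $Y_{\varepsilon}=Y-\varepsilon\mathbf{1}_{\{Y>0\}}$ to restore the strict acceptance inequality that degrades to ``$\geqslant$'' when passing to the limit; uniqueness is then dismissed as ``a similar limit argument'' and left to the reader. You instead re-prove duality from scratch, directly on unbounded outcomes, via a separating CARA agent sandwiched between the two utilities (with Jensen's inequality doing the work of Pratt's comparison), and you actually write out the uniqueness half --- CARA pairs plus the scaling trick --- which the paper omits entirely. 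What each buys: the paper's route requires no smoothness of the utilities and no Arrow--Pratt machinery, at the price of being parasitic on the AS theorem (and, strictly speaking, AS prove duality only for finitely-valued gambles, not for all bounded outcomes, so even that step hides an approximation); your route is self-contained on $\setl$, needs no limit argument in the main line, and supplies the missing uniqueness proof. Note that your truncation ``fallback'' in the last paragraph is not an alternative --- it \emph{is} the paper's proof.

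Two holes need patching. First, your separating coefficient need not be strictly positive: if $u_2$ is linear and $\inf_x(-u_1''/u_1')(x)=0$, the sandwich yields only $\alpha=0$, and CARA-$0$ is not a usable agent; this case must be handled separately (it is easy: a risk-neutral agent accepts every outcome with $\Rin<+\infty$, since membership in $\seta\cup\setb$ forces nonnegative mean, strictly positive unless the outcome is $0$). Relatedly, the equivalence between the behavioural definition of ``uniformly more risk-averse'' and the bound $\inf(-u_1''/u_1')\geqslant\sup(-u_2''/u_2')$ is itself a nontrivial theorem of Aumann and Serrano requiring twice-differentiable utilities, so your proof relocates the dependence on their work rather than removing it. Second, in the uniqueness step you need $\Rin(h)=\Rin(g)\Rightarrow I(h)=I(g)$ before you may write $I=\phi\circ\Rin$; this does not follow from the two implications you derived, but one more scaling argument closes it: if $I(h)<I(g)$ while $\Rin(h)=\Rin(g)\in(0,\infty)$, then for every $c>1$ one has $\Rin(g)<\Rin(ch)$, hence $I(g)<I(ch)=cI(h)$, and letting $c\downarrow 1$ gives $I(g)\leqslant I(h)$, a contradiction (the cases $\Rin(h)=\Rin(g)=0$ and $=+\infty$ being handled by nontriviality and by the real-valuedness convention, respectively). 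With these repairs your argument is complete.
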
 

\section{A Portfolio Selection Problem}
\noindent
Having introduced the {\dindex} and studied its properties.  We will consider a portfolio selection problem under the {\dindex}  in this section.
\par
An agent in the market is going to find an outcome $X$ to
\begin{align}\label{obj0}
 \min\limits_{\bx } \; \Rin(u(X -\ell ))\qquad
 \mathrm{s.t.} \quad \BE[\rho X ]=x,\quad X\in\lone,
\end{align}
where the utility function $u:\R\mapsto\R$ of the agent is concave, strictly increasing, and differentiable with $u(0)=0$; the random variable $\rho>0$ is the \emph{market price of risk} or \emph{pricing kernel} with mean value $1$; the constant $\ell$ is the \emph{benchmark} or \emph{reference point} of the agent,\footnote{For simplicity, we consider deterministic benchmark only. There is difficulty to generalize to random case.} which is typically larger than the \emph{initial endowment} $x$; and the set of possible outcomes is defined as $\setl^1 \defin\{X\in\setl:\BE[|X|]<+\infty\}$.
\par
We assume that $\BE[\rho\ln(\rho)]<+\infty$ and $\BE[ \ln(\rho)]>-\infty$.\footnote{In the Black-Scholes, or Merton models, $\rho$ is lognormal distributed. In this case, this assumption holds true.} Note that $\BE[\rho\ln(\rho)]>\BE[\rho]\ln(\BE[\rho])=0$ and $\BE[ \ln(\rho)]< \ln(\BE[\rho])=0$ by the Jensen's inequality.\footnote{Throughout this paper, we assume $\rho$ is not a constant if not elsewhere specified. }
\par
The constraint $\BE[\rho \bx ]=x$ is often called the \emph{budget constraint} and limits the choice of a particular hedging strategy as attainable or replicable given an initial endowment $x$. In this paper, we assume the market is complete, that is to say, any outcome $X\in\lone$ can be replicated. However, we do not study here how to replicate an outcome. Interested readers can consult Pardoux and Peng (1990) for an analysis of related topics.
\par
It may be observed that people   tend to think of possible outcomes relative to a benchmark or reference point rather than the absolute outcomes themselves. This phenomenon is called the framing effect. So we consider the relative outcome $ X -\ell $ instead of the absolute outcome $X$ in the target. It turns out that the benchmark plays an essential role in the problem.
\par
A risk-averse agent   tends to derive less utility from consuming additional units of the same product.
This phenomenon is called the law of diminishing marginal utility and is reflected by the concavity of the utility function in the target. If the agent is risk-neutral, then the utility function is just the identical function, and we will discuss it as an example later.
 
\section{Solving the Portfolio Selection Problem}
\noindent
Let $Y=X -\ell $. Then we can rewrite problem \eqref{obj0} as
\begin{align}\label{obj1}
 \inf\limits_{Y} \; \Rin(u(Y))\qquad
 \mathrm{s.t.} \quad \BE[\rho Y]=-y, \quad Y\in \lone,
\end{align}
where $y=\ell-x$ is called the \emph{surplus level}. We investigate this problem instead of problem \eqref{obj0}. The optimal value of problem \eqref{obj1} is defined as $V(y)$. It is clear that $V(y)=0$ whenever $y\leqslant 0$ because $Y=-y\geqslant 0$ is a feasible solution and $\Rin(u(Y))=0$. In this case, the agent's benchmark is too low, so he/she can achieve it without any risk at all.
\par
From now on, we focus on the case $y>0$.
\par
\subsection{Well-posedness}
\noindent
Let us first study the well-posedness issue (whether or not the value is finite) of problem \eqref{obj1}.
\begin{prop}\label{vfinite}
The value of problem \eqref{obj1} is finite if and only if the set
 \begin{align}\label{define:sety}
\sets_y & \defin \{Y\in \lone: \BE[\rho Y]=-y,\quad u(Y)\in\seta \;\}
\end{align}
is not empty. Moreover, if the value of problem \eqref{obj1} is finite, then the optimal solution, if it exists, must belong to $\sets_y$.
 \end{prop}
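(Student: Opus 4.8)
The plan is to recast the finiteness of $V(y)$ as a membership question already settled by the category analysis following the {\blemma}, where it was shown that $\Rin(Z)$ is finite if and only if $Z\in\seta\cup\setb$. First I would check that the objective is always meaningful: for $Y\in\lone$, concavity of $u$ with $u(0)=0$ gives $u(Y)\leqslant u'(0)Y$, hence $\BE[(u(Y))^+]\leqslant u'(0)\BE[Y^+]<+\infty$, so $u(Y)\in\setl$ and therefore falls into exactly one of the four categories $\seta,\setb,\setc,\setd$. A feasible $Y$ thus yields a finite objective $\Rin(u(Y))$ precisely when $u(Y)\in\seta\cup\setb$, and I would build the whole argument around this equivalence.

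The one delicate point, which I expect to be the crux, is to exclude $u(Y)\in\setb$ for every feasible $Y$. Since $\setb=\{Z\in\setl:\BE[Z^-]=0\}$, the membership $u(Y)\in\setb$ means $u(Y)\geqslant 0$ almost surely; as $u$ is strictly increasing with $u(0)=0$, this is equivalent to $Y\geqslant 0$ almost surely. I would then invoke $\rho>0$ to deduce $\BE[\rho Y]\geqslant 0$, which contradicts the budget constraint $\BE[\rho Y]=-y<0$ in the regime $y>0$. This yields a clean dichotomy for every feasible $Y$: either $\Rin(u(Y))=+\infty$ (when $u(Y)\in\setc\cup\setd$) or $u(Y)\in\seta$, that is $Y\in\sets_y$.

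Granting the dichotomy, both directions are short. For sufficiency, if $\sets_y\neq\emptyset$ I would take any $Y_0\in\sets_y$; then $u(Y_0)\in\seta$ gives $\Rin(u(Y_0))<+\infty$, whence $V(y)\leqslant\Rin(u(Y_0))<+\infty$. For necessity, I would use that $\Rin$ is nonnegative (Proposition \ref{indexproperties}), so the infimum defining $V(y)$ can be finite only if at least one feasible $Y$ has $\Rin(u(Y))<+\infty$ --- otherwise every feasible value, and hence the infimum, equals $+\infty$; by the dichotomy such a $Y$ lies in $\sets_y$, so $\sets_y\neq\emptyset$.

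The ``moreover'' assertion follows by the same reasoning: if $V(y)$ is finite and an optimizer $Y^*$ exists, then $\Rin(u(Y^*))=V(y)<+\infty$ forces $u(Y^*)\in\seta\cup\setb$, and the feasibility argument again rules out $\setb$, leaving $u(Y^*)\in\seta$, i.e. $Y^*\in\sets_y$. No compactness or existence machinery enters; the real content is the sign bookkeeping in the $\setb$-exclusion step, while everything else is a direct application of the finiteness criterion for $\Rin$.
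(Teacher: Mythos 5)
Your proposal is correct and follows essentially the same route as the paper's proof: the heart of both arguments is the exclusion of $u(Y)\in\setb$ for feasible $Y$, since $u(Y)\geqslant 0$ a.s.\ forces $Y\geqslant 0$ a.s.\ and hence $\BE[\rho Y]\geqslant 0>-y$, contradicting the budget constraint. Your preliminary check that $u(Y)\in\setl$ (via $u(Y)\leqslant u'(0)Y$) is a small bit of bookkeeping the paper leaves implicit, but otherwise the two proofs coincide.
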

\begin{proof}
``$\Longrightarrow$'':
The value is finite, so there is $Y\in \lone$ satisfying $\BE[\rho Y]=-y$, $u(Y)\in\seta\cup\setb$. If $u(Y)\in\setb$, then $Y\geqslant 0$ a.s., $\BE[\rho Y]\geqslant 0> -y$, a contradiction. This also confirms that the set $\sets_y $ is not empty and the optimal solution, if it exists, must belong to $\sets_y$. 
\par
``$\Longleftarrow$'': This is evident.
\end{proof}
Now we study the set $\sets_y$. The condition $u(Y)\in\seta$ is not so easy to justify, so we want to find a more easily justified replacement. Let us consider the following set
 \begin{align}
\hat{\sets}_y & \defin \{Y\in \lone: \BE[\rho Y]=-y,\quad \BE[u(Y)]>0\;\}.
\end{align}
The set $\hat{\sets}_y$ is not so hard to analyze. It is related to the classic Merton's optimum consumption model (Merton 1971).
\begin{lemma}\label{setdemp} 
Define
\begin{align}\label{define:haty}
\hat{y}\defin \sup\{y: \textrm{ there is $Y\in \lone$ satisfying $\BE[\rho Y]=-y$ and $\BE[u(Y)]>0 $}\}.
\end{align}
Then the set $\hat{\sets}_y$ is not empty if and only if $y<\hat{y}$.
\end{lemma}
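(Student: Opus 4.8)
The plan is to recast the claim in terms of the feasibility set $S=\{y:\hat{\sets}_y\neq\emptyset\}$, whose supremum is exactly $\hat{y}$ by definition \eqref{define:haty}, and to show that $S$ is precisely the open half-line $(-\infty,\hat{y})$. This follows from two structural facts: that $S$ is closed downwards (an interval unbounded below), and that $S$ possesses no largest element. At the outset I would record that for $Y\in\lone$ the tangent-line inequality $u(Y)\leqslant u'(0)Y$ (with $u'(0)>0$, since $u$ is strictly increasing and concave) gives $\BE[u(Y)^+]\leqslant u'(0)\BE[Y^+]<+\infty$; hence whenever $\BE[u(Y)]>0$ the random variable $u(Y)$ is genuinely integrable, so all expectations manipulated below are finite.

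First I would establish downward monotonicity. Suppose $y'\in S$, witnessed by some $Y'\in\lone$ with $\BE[\rho Y']=-y'$ and $\BE[u(Y')]>0$, and let $y<y'$. Setting $Y=Y'+(y'-y)$ keeps $Y\in\lone$ and, since $\BE[\rho]=1$, gives $\BE[\rho Y]=-y'+(y'-y)=-y$; as $u$ is increasing, $u(Y)\geqslant u(Y')$, so $\BE[u(Y)]\geqslant\BE[u(Y')]>0$ and $y\in S$. This at once yields the ``if'' direction: if $y<\hat{y}=\sup S$ then $y$ is not an upper bound of $S$, so some $y'\in S$ exceeds it, and monotonicity gives $y\in S$, i.e.\ $\hat{\sets}_y\neq\emptyset$.

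For the ``only if'' direction it suffices to show $S$ has no maximum, for then $y\in S$ forces the existence of $y''\in S$ with $y''>y$, whence $y<y''\leqslant\hat{y}$. Given $Y\in\hat{\sets}_y$, the idea is to shave a little off $Y$ on a region where $Y$ is strictly positive, thereby strictly lowering the budget (so the associated surplus level strictly increases) at an arbitrarily small cost in expected utility. Concretely, since $\BE[u(Y)]>0$ while $u$ is strictly increasing with $u(0)=0$, one cannot have $Y\leqslant 0$ almost surely, so $\BP(Y>\eta)>0$ for some $\eta>0$; put $B=\{Y>\eta\}$ and $Y''=Y-\delta\mathbf{1}_B$ with $0<\delta<\eta$. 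Then $\BE[\rho Y'']=-y-\delta\,\BE[\rho\mathbf{1}_B]$ with $\BE[\rho\mathbf{1}_B]>0$, so the new surplus $y''=y+\delta\,\BE[\rho\mathbf{1}_B]$ is strictly larger than $y$; and on $B$ we have $Y-\delta>0$, so by concavity the pointwise utility loss is at most $u'(Y-\delta)\,\delta\leqslant u'(0)\,\delta$, giving $\BE[u(Y'')]\geqslant\BE[u(Y)]-u'(0)\,\delta\,\BP(B)$, which stays positive once $\delta$ is small enough. Hence $Y''\in\hat{\sets}_{y''}$ with $y''>y$.

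The step I expect to require the most care is this last perturbation. The naive choice of subtracting a constant from $Y$ everywhere can fail when $u$ has unbounded marginal utility ($u'(t)\to+\infty$ as $t\to-\infty$) and $Y$ has a heavy left tail, since the expected utility could then drop by an uncontrollable, even infinite, amount. Restricting the perturbation to the set where $Y$ is bounded away from below sidesteps this difficulty entirely: there $u'$ is dominated by $u'(0)$, so the loss in expected utility is uniformly $O(\delta)$ and the positivity $\BE[u(Y)]>0$ is preserved for small $\delta$. The remaining verifications (membership in $\lone$, positivity of $\BE[\rho\mathbf{1}_B]$ from $\rho>0$ and $\BP(B)>0$) are routine.
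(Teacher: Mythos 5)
Your proof is correct and takes essentially the same route as the paper's: the constant-shift argument for downward monotonicity (hence the ``if'' direction) is identical, and your ``$S$ has no maximum'' step is exactly the paper's contradiction at $y=\hat{y}$, obtained by the same device of subtracting a small amount from $Y$ on a set where $Y$ is positive, so that the budget strictly drops while the expected utility stays positive. The only difference is technical: the paper perturbs on $\{Y>0\}$ and keeps $\BE[u(Y_\varepsilon)]>0$ for small $\varepsilon$ via the monotone convergence theorem, whereas you perturb on $\{Y>\eta\}$ with $0<\delta<\eta$ and control the loss by the concavity bound $u'(Y-\delta)\leqslant u'(0)$ --- an equally valid, slightly more explicit estimate.
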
 
It is clear that $\hat{y}\geqslant 0$ because whenever $y<0$, $Y=-y\in \lone$ satisfying $\BE[\rho Y]=-y$ and $\BE[u(Y)]>0 $. The value of $\hat{y}$ is not so hard to derive in general cases.
Let us look at two most important and widely used cases.
\begin{example}
If $u(\cdot)$ is linear, then $\hat{y}=+\infty$.\footnote{Here we assume that $\mathrm{essinf}\:\rho=0$. For example, $\rho$ is lognormal distributed. In general case, $\hat{y}=1/\mathrm{essinf}\:\rho$.}
\end{example}
\begin{example}
If $u(\cdot)$ is strictly concave, then $\hat{y}= -\lim\limits_{\lambda \to\hat{\lambda} +}\BE\left[ \rho (u')^{-1}\big(\lambda\rho\big) \right]$, where
 \begin{align}
\hat{\lambda}=\sup\left\{\lambda>0: \BE\left[u\Big((u')^{-1}\big(\lambda \rho\big)\Big)\right]>0 \right\}.
\end{align}
\end{example}
Next, we go back to study the set $ \sets_y$.
\begin{lemma} \label{setdemp2}
The set $ \sets_y$ is not empty if and only if $y<\hat{y}$.
\end{lemma}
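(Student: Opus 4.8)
The plan is to sandwich $\sets_y$ between the already-understood set $\hat{\sets}_y$ and a shifted, truncated copy of one of its members, so that Lemma \ref{setdemp} can be transferred. First I would record the easy inclusion $\sets_y\subseteq\hat{\sets}_y$: if $Y\in\sets_y$, then $u(Y)\in\seta$ forces $\BE[u(Y)^+]>\BE[u(Y)^-]$, hence $\BE[u(Y)]>0$, so $Y\in\hat{\sets}_y$. Thus $\sets_y\neq\emptyset$ implies $\hat{\sets}_y\neq\emptyset$, which by Lemma \ref{setdemp} gives $y<\hat{y}$; this settles the ``only if'' direction. It also isolates exactly what the converse must supply: starting from a member of $\hat{\sets}_y$, for which only $\BE[u(Y)]>0$ is guaranteed, I must additionally arrange $u(Y)\in\mgf$ and $\BE[u(Y)^-]>0$. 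The latter is automatic whenever $y>0$, since $\rho>0$ and $\BE[\rho Y]=-y<0$ preclude $Y\geqslant 0$ a.s., whence $\BP(Y<0)>0$ and $\BE[u(Y)^-]>0$.

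For the ``if'' direction, suppose $y<\hat{y}$. Because the threshold is strict, I would first pick an intermediate level $y'$ with $y<y'<\hat{y}$; by Lemma \ref{setdemp} there is $Y'\in\hat{\sets}_{y'}$, i.e. $\BE[\rho Y']=-y'$ and $\BE[u(Y')]>0$. The integrability defect is then repaired by truncating the loss from below: set $Y'_n=\max\{Y',-n\}$. Each $Y'_n$ is bounded below, so $u(Y'_n)$ is bounded below and therefore lies in $\mgf$; monotonicity of $u$ and of expectation give $\BE[u(Y'_n)]\geqslant\BE[u(Y')]>0$; and since the finite budget value $\BE[\rho Y']=-y'$ forces $\rho Y'$ to be integrable, dominated convergence (with dominating function $\rho|Y'|$) yields $\BE[\rho Y'_n]\to\BE[\rho Y']=-y'$. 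Writing $y'_n=-\BE[\rho Y'_n]$, we have $y'_n\to y'>y$, so $y'_n>y$ for all large $n$; I fix one such $n$.

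Finally I would re-center to hit the budget exactly. Put $Z=Y'_n+(y'_n-y)$, adding the positive constant $a=y'_n-y>0$. Using $\BE[\rho]=1$, $\BE[\rho Z]=-y'_n+a=-y$, restoring feasibility, and $Z\in\lone$ since $|Y'_n|\leqslant|Y'|$ is integrable. Because $a>0$ and $u$ is increasing, $u(Z)\geqslant u(Y'_n)$ pointwise, so $\BE[u(Z)]\geqslant\BE[u(Y'_n)]>0$; moreover $Z$ remains bounded below, so $u(Z)\in\mgf$ and $\BE[u(Z)^-]<+\infty$; and $\BE[\rho Z]=-y<0$ again forces $\BP(Z<0)>0$, hence $\BE[u(Z)^-]>0$. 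Together these give $\BE[u(Z)^+]>\BE[u(Z)^-]>0$ with $u(Z)\in\mgf$, i.e. $u(Z)\in\seta$, so $Z\in\sets_y$. The one delicate point---and the reason for introducing $y'$---is that repairing integrability by truncation necessarily perturbs $\BE[\rho Y]$; starting directly from $\hat{\sets}_y$ would force a downward shift that could destroy $\BE[u(\cdot)]>0$, whereas the strict slack $y'>y$ makes the required re-centering an \emph{upward} shift, which is monotone and hence harmless. That is the main obstacle, and exploiting the openness of the condition $y<\hat{y}$ is what dissolves it.
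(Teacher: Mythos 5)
Your proof is correct and follows essentially the same route as the paper's: the ``only if'' direction via the inclusion $\sets_y\subseteq\hat{\sets}_y$ and Lemma \ref{setdemp}, and the ``if'' direction by picking an intermediate level $y'\in(y,\hat{y})$, truncating the loss via $\max\{Y',-n\}$, passing to the limit in the budget, and shifting upward by a positive constant to restore $\BE[\rho Z]=-y$. Your only departures are cosmetic (dominated rather than monotone convergence, and an explicit verification of $u(Z)\in\seta$ that the paper dismisses as ``very easy to verify'').
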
 
Putting all of the results obtained thus far together, we can solve the well-posedness issue of problem \eqref{obj1}  completely.
\begin{thm}\label{wellpose}
The value of problem \eqref{obj1} is finite if and only if $y<\hat{y}$, where $\hat{y}$ is given by \eqref{define:haty}.
 Moreover, when $y<\hat{y}$, the optimal solution of problem \eqref{obj1}, if it exists, must belong to $\sets_y$, where $\sets_y$ is given by \eqref{define:sety}.
\end{thm}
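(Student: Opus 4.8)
The plan is to assemble the theorem directly from the two characterizations already in hand, Proposition \ref{vfinite} and Lemma \ref{setdemp2}, by chaining their equivalences; no fresh construction is needed.

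First I would invoke Proposition \ref{vfinite}, which asserts that the value of problem \eqref{obj1} is finite precisely when the set $\sets_y$ is nonempty. This converts the analytic question of finiteness into a purely existential one about the feasible set $\sets_y$. Next I would apply Lemma \ref{setdemp2}, which states that $\sets_y\neq\emptyset$ if and only if $y<\hat{y}$, with $\hat{y}$ defined by \eqref{define:haty}. Composing the two equivalences yields the first assertion at once: the value is finite if and only if $y<\hat{y}$.

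For the ``moreover'' clause I would argue as follows. Once $y<\hat{y}$, the first part shows the value is finite, which is exactly the hypothesis under which the second half of Proposition \ref{vfinite} applies; that statement then forces any optimal solution, should one exist, to lie in $\sets_y$. Thus the qualifier ``if it exists'' is carried through the chaining without modification.

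The substantive work has already been done upstream, in establishing Proposition \ref{vfinite} and, especially, Lemma \ref{setdemp2} (the latter bridging the awkward membership condition $u(Y)\in\seta$ to the more tractable positivity $\BE[u(Y)]>0$ used to define $\hat{y}$ through $\hat{\sets}_y$ and the Merton connection of Lemma \ref{setdemp}). Consequently the present theorem is essentially a packaging step, and I expect no genuine obstacle. The only point demanding a little care is to confirm that the finiteness regime $y<\hat{y}$ coincides exactly with the regime in which Proposition \ref{vfinite} guarantees $\sets_y$-membership of optimizers, which is immediate since both hinge on the single condition that the value be finite.
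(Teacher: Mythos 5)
Your proposal is correct and matches the paper's own (implicit) proof exactly: the paper introduces Theorem \ref{wellpose} with ``putting all of the results obtained thus far together,'' meaning precisely the chaining of Proposition \ref{vfinite} (finiteness $\Leftrightarrow$ $\sets_y\neq\emptyset$, plus the location of optimizers) with Lemma \ref{setdemp2} ($\sets_y\neq\emptyset$ $\Leftrightarrow$ $y<\hat{y}$). Your handling of the ``moreover'' clause via the second half of Proposition \ref{vfinite} is also exactly what the paper intends.
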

If $y\geqslant \hat{y}$ or $\hat{y}= 0$, then the value of problem \eqref{obj1} and that of the original problem \eqref{obj0} are both infinity. That is to say, if the benchmark level $\ell$ compared to the initial endowment $x$ is too aggressive, then the investment risk will be beyond \emph{any} agent's tolerance. By contrast, in the classic mean-variance model, \emph{some} agents may enter the market regardless of the
benchmark level because the investment risk will still be lower than his/her tolerance.
\par
In the following part, we look for the optimal solution of problem \eqref{obj1}. From now on, we assume $0<y<\hat{y}$.

\subsection{Optimal Solution}
\noindent
The biggest difficulty in solving problem \eqref{obj1} is to overcome the nonlinearity of the {\dindex}. To overcome this, we introduce a series of problems and study their relationships. Eventually, we link problem \eqref{obj1} to a solvable classic portfolio selection problem, and then deduce its optimal solution and optimal value. It will be seen that {\blemma} plays a key role in this approach.
\par
Before introducing a new series of problems, we firstly need to analyze the value function of problem \eqref{obj1}.
\begin{lemma}\label{vconvex}
The value function $V(\cdot)$ of problem \eqref{obj1} is finite, increasing and convex on $[0,\hat{y})$.
\end{lemma}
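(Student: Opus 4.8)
The plan is to establish the three assertions—finiteness, monotonicity, and convexity—separately, relying entirely on the structural properties of $\Rin$ collected in Proposition \ref{indexproperties} together with the normalization $\BE[\rho]=1$. Finiteness is essentially already in hand: Theorem \ref{wellpose} gives $V(y)<+\infty$ for every $y\in[0,\hat{y})$, while $\Rin$ is nonnegative (Proposition \ref{indexproperties}(4)), so $V(y)\geqslant 0$; hence $V$ is a finite nonnegative number throughout $[0,\hat{y})$, and recall $V(0)=0$. The real content of the remaining two claims is to transfer a feasible outcome from one surplus level to another while controlling its index, and the tools that make this possible are the monotonicity of $\Rin$ under first-order dominance (Proposition \ref{indexproperties}(5)) and, for convexity, the convexity of $\Rin$ itself (Proposition \ref{indexproperties}(3)). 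Throughout I would use that $\lone$ is a linear space, so affine shifts and convex combinations of feasible outcomes remain in $\lone$.

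To prove $V$ is increasing, I would fix $0\leqslant y_1<y_2<\hat{y}$ and transfer any feasible $Y$ for $y_2$ (so $Y\in\lone$, $\BE[\rho Y]=-y_2$) to a feasible outcome for $y_1$ by adding the constant $c=y_2-y_1>0$. Writing $Y'=Y+c$, the identity $\BE[\rho]=1$ gives $\BE[\rho Y']=\BE[\rho Y]+c=-y_1$, so $Y'$ is feasible for $y_1$. Since $u$ is increasing, $u(Y')=u(Y+c)\geqslant u(Y)$ almost surely, so $u(Y')$ first-order dominates $u(Y)$, and Proposition \ref{indexproperties}(5) yields $\Rin(u(Y'))\leqslant \Rin(u(Y))$. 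Hence $V(y_1)\leqslant \Rin(u(Y'))\leqslant \Rin(u(Y))$, and taking the infimum over all feasible $Y$ for $y_2$ gives $V(y_1)\leqslant V(y_2)$. The final inequality is vacuous when $\Rin(u(Y))=+\infty$, so no restriction on the admissible $Y$ is required.

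For convexity, I would fix $y_1,y_2\in[0,\hat{y})$ and $\lambda\in[0,1]$, set $y_\lambda=\lambda y_1+(1-\lambda)y_2\in[0,\hat{y})$, and, given $\varepsilon>0$, pick $\varepsilon$-optimal feasible outcomes $Y_1,Y_2\in\lone$ with $\BE[\rho Y_i]=-y_i$ and $\Rin(u(Y_i))\leqslant V(y_i)+\varepsilon$, which exist and are finite because $y_i<\hat{y}$. The convex combination $Y_\lambda=\lambda Y_1+(1-\lambda)Y_2\in\lone$ satisfies $\BE[\rho Y_\lambda]=-y_\lambda$, so it is feasible for $y_\lambda$. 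Concavity of $u$ gives the pointwise bound $u(Y_\lambda)\geqslant \lambda u(Y_1)+(1-\lambda)u(Y_2)$ almost surely, so first-order monotonicity (Proposition \ref{indexproperties}(5)) followed by convexity of the index (Proposition \ref{indexproperties}(3)) yields
\begin{align*}
\Rin(u(Y_\lambda))\leqslant \Rin\big(\lambda u(Y_1)+(1-\lambda)u(Y_2)\big)\leqslant \lambda\,\Rin(u(Y_1))+(1-\lambda)\,\Rin(u(Y_2)).
\end{align*}
Thus $V(y_\lambda)\leqslant \lambda V(y_1)+(1-\lambda)V(y_2)+\varepsilon$, and letting $\varepsilon\to 0$ proves convexity.

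The step I expect to carry the real weight is the convexity argument, precisely because it must chain two distinct properties of $\Rin$: concavity of $u$ only delivers a pointwise inequality between $u(Y_\lambda)$ and $\lambda u(Y_1)+(1-\lambda)u(Y_2)$, which is useless without first-order monotonicity to push it through the index, after which the index's own convexity closes the estimate. The rest is bookkeeping rather than obstacle: since problem \eqref{obj1} is an infimum that need not be attained, I work with $\varepsilon$-optimal outcomes and lean on the finiteness from Theorem \ref{wellpose} to guarantee $V(y_i)+\varepsilon<+\infty$, and I use that $\lone$ is closed under the affine shift and convex combination above (with $u$ of these outcomes still lying in $\setl$, as in the standing setup) so that $\Rin$ is meaningfully evaluated at each competitor.
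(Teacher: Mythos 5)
Your proposal is correct and follows essentially the same route as the paper's proof: convexity comes from the chain $u(\lambda Y_1+(1-\lambda)Y_2)\geqslant \lambda u(Y_1)+(1-\lambda)u(Y_2)$ pushed through first-order monotonicity and then convexity of $\Rin$ (Proposition \ref{indexproperties}), monotonicity from the constant-shift argument via first-order dominance, and finiteness from Theorem \ref{wellpose}. Your use of $\varepsilon$-optimal outcomes merely makes explicit the passage to infima that the paper leaves implicit.
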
 
By the convexity of the value function, we obtain that
\begin{coro}
The value function $V(\cdot)$ is continuous on $[0,\hat{y})$.\footnote{In fact, $V(\cdot)$ is convex on $(-\infty,\hat{y})$, so $V(\cdot)$ is continuous on $(-\infty,\hat{y})$. In particular, $V(0)=0$.}
\end{coro}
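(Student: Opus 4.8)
The plan is to derive the corollary from the single classical fact that a finite, real-valued convex function on an \emph{open} interval is automatically continuous (indeed, locally Lipschitz) on that interval. The subtlety is that the corollary asserts continuity on the half-open interval $[0,\hat{y})$, whereas a bare convexity statement only yields continuity on the \emph{interior}; continuity from the right at the left endpoint $0$ is not guaranteed by convexity on $[0,\hat{y})$ alone. I would circumvent this exactly as the footnote suggests: enlarge the domain to the open interval $(-\infty,\hat{y})$, on which $0$ becomes an interior point.

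First I would record that $V$ is finite on all of $(-\infty,\hat{y})$. On $(-\infty,0]$ this is immediate, since it was observed before Section~4.1 that $V(y)=0$ whenever $y\leqslant 0$; on $[0,\hat{y})$ it is the finiteness assertion of Lemma~\ref{vconvex}.

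Next I would upgrade the convexity of Lemma~\ref{vconvex} from $[0,\hat{y})$ to $(-\infty,\hat{y})$. The only thing to check is that gluing the constant branch $V\equiv 0$ on $(-\infty,0]$ to the convex branch on $[0,\hat{y})$ preserves convexity across the junction $y=0$. Since $V$ is increasing and convex on $[0,\hat{y})$ with $V(0)=0$, its right-hand derivative at $0$ is nonnegative, while the left branch has slope $0$; hence the one-sided slopes are ordered correctly and the glued function has nondecreasing difference quotients, i.e.\ it is convex on the whole of $(-\infty,\hat{y})$. Concretely this is a short verification of the convexity inequality $V(\lambda a+(1-\lambda)b)\leqslant \lambda V(a)+(1-\lambda)V(b)$ split according to the signs of $a$ and $b$, the only nontrivial case being $a<0<b$.

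Finally, applying the classical theorem to the finite convex function $V$ on the open interval $(-\infty,\hat{y})$ gives continuity throughout $(-\infty,\hat{y})$, hence on the subinterval $[0,\hat{y})$, with right-continuity at $0$ recovering $V(0)=0$. The main (and essentially only) obstacle is the endpoint issue just described: once $0$ is made interior by passing to the larger open interval, the conclusion is a one-line invocation of convexity-implies-continuity.
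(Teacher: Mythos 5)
Your proposal is correct and takes essentially the same route as the paper: as the corollary's footnote indicates, the whole point is that $V(\cdot)$ is finite and convex on the open interval $(-\infty,\hat{y})$, where $0$ is an interior point, so the classical fact that a finite convex function on an open interval is continuous yields continuity on $[0,\hat{y})$ including right-continuity at $0$. The only place you substitute your own argument is the gluing step establishing convexity across $y=0$; that verification is valid, though the paper's claim of convexity on $(-\infty,\hat{y})$ also follows even more directly by observing that the proof of Lemma \ref{vconvex} (feasibility, concavity of $u$, and the monotonicity and convexity of the index) never uses $y_1,y_2\geqslant 0$ and hence applies verbatim to all $y_1<y_2<\hat{y}$.
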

From the above result, it can be seen that if the benchmark level $\ell$ is very close to the initial endowment $x$, then the risk can be arbitrarily small. In other words, if an agent is not greedy, then he/she can set a benchmark level which brings the investment risk within his/her level of tolerance (no matter how small it is, provided it is not zero). That is to say, he/she will enter the market. By contrast, in the classic mean-variance model without a risk-free asset, there is a so-called system risk which is positive. If the system risk is beyond the agent's risk tolerance, then no matter how small the benchmark level is, he/she will not enter the market.
\par
It is clear that the {\dindex} which is the target of problem \eqref{obj1} is not easy to deal with. To avoid this, we introduce a bridge problem
\begin{align}\label{obj2}
 \sup\limits_{(\alpha,Y)} \;\alpha\qquad
 \mathrm{s.t.} \quad \BE[e^{-\alpha u(Y)}]\leqslant 1,\quad \BE[\rho Y]=-y, \quad Y\in \lone.
\end{align}
Our novel approach is based on the following result which suggests the relationships between the above problem and problem \eqref{obj1}.
\begin{prop}\label{obj1=obj2}
Fix $y\in(0,\hat{y})$.
A pair $(\alpha^*, Y^*)$ is an optimal solution to problem \eqref{obj2} if and only if $Y^*$ is an optimal solution to problem \eqref{obj1} and $0<\alpha^*=1/\Rin(u(Y^*))<+\infty$.
\end{prop}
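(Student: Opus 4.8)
The plan is to collapse the two-variable maximisation \eqref{obj2} into a one-variable problem by maximising over $\alpha$ first, and to recognise that for each fixed feasible $Y$ this inner maximisation is solved \emph{exactly, and attainably,} by the reciprocal of the {\dindex}. Fix $y\in(0,\hat y)$ and let $Y\in\lone$ satisfy $\BE[\rho Y]=-y$. I would first note that the budget constraint forbids $Y\geqslant 0$ a.s., hence $u(Y)\notin\setb$ and $\BP(u(Y)\neq 0)>0$. Writing $\hat\alpha(u(Y)):=\sup\{\alpha\geqslant 0:\BE[e^{-\alpha u(Y)}]\leqslant 1\}=1/\Rin(u(Y))$ and applying the {\blemma} to the outcome $u(Y)$ gives $\{\alpha\geqslant 0:\BE[e^{-\alpha u(Y)}]\leqslant 1\}=[0,\hat\alpha(u(Y))]$ and, crucially, $\BE[e^{-\hat\alpha(u(Y))\,u(Y)}]\leqslant 1$. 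Since $\alpha=0$ is always admissible, the relevant supremum over $\alpha$ is nonnegative, so for each admissible $Y$ the largest $\alpha$ keeping $(\alpha,Y)$ feasible for \eqref{obj2} is precisely $1/\Rin(u(Y))$, and it is attained. This attainment is the technical core of the argument and is exactly what the {\blemma} supplies.

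Granting this, I would read the optimal value of \eqref{obj2} as a supremum over $Y$ of the inner optimum, obtaining the value identity: the optimal value of \eqref{obj2} equals $\sup_Y\hat\alpha(u(Y))=\sup_Y 1/\Rin(u(Y))=1/\inf_Y\Rin(u(Y))=1/V(y)$, where the suprema and infimum run over admissible $Y$. That $y\in(0,\hat y)$ guarantees, via Lemma \ref{setdemp2} and Theorem \ref{wellpose}, that $\sets_y\neq\emptyset$ and $V(y)<+\infty$, so this value is strictly positive.

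For the forward implication, suppose $(\alpha^*,Y^*)$ solves \eqref{obj2}. Feasibility gives $\alpha^*\leqslant\hat\alpha(u(Y^*))$; optimality together with the inner-maximisation identity gives $\hat\alpha(u(Y^*))\leqslant\sup_Y\hat\alpha(u(Y))=\alpha^*$. Hence $\alpha^*=\hat\alpha(u(Y^*))=1/\Rin(u(Y^*))$, and the value identity then forces $\Rin(u(Y^*))=V(y)$, i.e. $Y^*$ solves \eqref{obj1}. Because $Y^*$ is optimal for \eqref{obj1}, Theorem \ref{wellpose} places $Y^*\in\sets_y$, so $u(Y^*)\in\seta$, whence the {\blemma} yields $0<\hat\alpha(u(Y^*))<+\infty$, that is $0<\alpha^*<+\infty$. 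For the converse, suppose $Y^*$ solves \eqref{obj1} and $0<\alpha^*=1/\Rin(u(Y^*))<+\infty$. Then $(\alpha^*,Y^*)$ is feasible for \eqref{obj2}: the constraint $\BE[e^{-\alpha^* u(Y^*)}]\leqslant 1$ is exactly the attainment clause of the {\blemma} at $\hat\alpha(u(Y^*))=\alpha^*$, while $\BE[\rho Y^*]=-y$ and $Y^*\in\lone$ are inherited from \eqref{obj1}. Its objective value $\alpha^*=1/\Rin(u(Y^*))=1/V(y)$ coincides with the optimal value of \eqref{obj2}, so the pair is optimal.

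The step needing the most care is the manipulation $\sup_Y 1/\Rin(u(Y))=1/\inf_Y\Rin(u(Y))$ in concert with the attainment of the inner $\alpha$-maximum: one must be sure that no admissible $Y$ drives $\Rin(u(Y))$ to $0$ or to $+\infty$ in a way that corrupts the reciprocal, i.e. that an optimiser cannot escape into $u(Y)\in\setb$ (ruled out by the sign of the budget constraint) or into $u(Y)\in\setc\cup\setd$ (where $\Rin=+\infty$ contributes nothing to the supremum). Keeping the classification $\seta,\setb,\setc,\setd$, the sign of the budget constraint, and the {\blemma} aligned throughout is what makes the reduction rigorous.
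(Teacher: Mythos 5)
Your proposal is correct and takes essentially the same route as the paper: both arguments rest on the {\blemma}'s attainment clause $\BE[e^{-\hat{\alpha}X}]\leqslant 1$, which makes the pair $(1/\Rin(u(Y)),Y)$ feasible for \eqref{obj2}, and on the sign of the budget constraint, which rules out $u(Y)\in\setb$ for any feasible $Y$. The paper merely packages these facts as two exchange-of-solutions contradiction arguments rather than your explicit value identity (optimal value of \eqref{obj2} equal to $1/V(y)$), and it re-derives inline the classification $u(Y^*)\in\seta$ that you delegate to Theorem \ref{wellpose}; the mathematical content is the same.
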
 
The above result links problem \eqref{obj1} to problem \eqref{obj2}. We now only need to consider the latter.
Problem \eqref{obj2} is significantly simpler than problem \eqref{obj1} since the {\dindex} has been removed.
However, there are still two variables that need to be optimized in problem \eqref{obj2}, and it may not be easy to find the optimal solution directly because the feasible set may not be convex and the standard Lagrange method cannot be applied directly. Our second novel idea is therefore to introduce a new series of single-variable optimization problems which are easier to solve but still closely linked with the bridge problem \eqref{obj2} and therefore also with problem \eqref{obj1}.
\par
Fix $y\in(0,\hat{y})$. Define a series of single-variable optimization problems parameterized by $\alpha>0$.
\begin{align}\label{obj3}
 \PP_{\alpha}:\quad \inf\limits_{ Y} \; \BE[e^{-\alpha u(Y)}] \qquad \mathrm{s.t.} \quad \BE[\rho Y]=-y, \quad Y\in \lone .
\end{align}
This is a Merton's optimum consumption-like problem, in which the value of absolute risk aversion $\alpha$ is given as a  priori.

Denote the value of the above problem as $\vp(\alpha)$ which is clearly nonnegative and finite, since $Y=-y$ is a feasible solution of it with a finite value. For the convenience of notation, $\vp(0)$ is naturally defined as 1.
\par
Before studying problem $ \PP_{\alpha}$, we need some preparations. For every $\alpha>0$, the mapping $x\mapsto e^{-\alpha u(x)}$ is strictly convex and decreasing on $\R$, so its derivative $x\mapsto -\alpha u'(x)e^{-\alpha u(x)}$ is a one to one increasing mapping from $\R$ to $(-\infty, 0)$. Let $I_{\alpha}(\cdot)$ denote the inverse of the mapping $x\mapsto -\alpha u'(x)e^{-\alpha u(x)}$. Then $I_{\alpha}(\cdot)$ is a one to one increasing mapping from $(-\infty, 0)$ to $\R$.
\begin{assump}\label{assump:I}
For each $\alpha>0$,
 $ \rho I_{\alpha}\left(\lambda \rho\right)\in\lone $ and $ I_{\alpha}\left(\lambda \rho\right) \in\lone $, whenever $\lambda\in(-\infty, 0)$.\footnote{This assumption may be replaced by a weaker one. }
\end{assump}
The behaviors of $\BE[\rho I_{\alpha}\left(\lambda \rho\right)]$ and $\BE[ I_{\alpha}\left(\lambda \rho\right)]$ are very complicated. We refer interested readers to Jin, Xu and Zhou (2008) for a complete study of a similar problem.
\begin{example}
If $u(x)=x.$ Then $I_{\alpha}(x)=- \alpha^{-1}\ln(-\alpha^{-1} x).$ Assumption \ref{assump:I} is satisfied.
\end{example}
Now we turn to solve problem $ \PP_{\alpha}$. The complete solution is given below.
\begin{prop}\label{obj3solution}
Fix $y\in(0,\hat{y})$. For each $\alpha>0$, problem $\PP_{\alpha}$ admits a unique solution
\begin{align}\label{y-alpha}
 Y_{\alpha}=I_{\alpha}\left(\lambda \rho\right),
\end{align}
where $\lambda\in (-\infty, 0)$ is the root of
\begin{align*}
\BE[\rho I_{\alpha}\left(\lambda \rho\right)]=-y.
\end{align*}
Moreover, the optimal value of problem $\PP_{\alpha}$ is given by $\vp(\alpha)=\BE[e^{-\alpha u(Y_{\alpha})}]$.
\end{prop}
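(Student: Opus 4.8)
The plan is to solve $\PP_{\alpha}$ as a strictly convex program with a single affine constraint, by the Lagrange (convex duality) method that is standard for Merton-type problems. As noted just before Assumption \ref{assump:I}, the map $x\mapsto e^{-\alpha u(x)}$ is strictly convex, so the objective $Y\mapsto\BE[e^{-\alpha u(Y)}]$ is strictly convex while the constraint $\BE[\rho Y]=-y$ is affine. Consequently it suffices to exhibit one feasible point meeting the first-order condition: strict convexity will then force it to be the unique global minimizer, so no separate compactness or existence argument is needed.

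First I would minimize the Lagrangian pointwise. For a multiplier $\lambda$ and fixed $\omega$, the scalar function $t\mapsto e^{-\alpha u(t)}-\lambda\rho(\omega)\,t$ is strictly convex, and its derivative $-\alpha u'(t)e^{-\alpha u(t)}-\lambda\rho(\omega)$ vanishes at exactly one point precisely when $\lambda\rho(\omega)\in(-\infty,0)$, i.e.\ when $\lambda<0$ (recall $\rho>0$). Since $I_{\alpha}$ is by construction the inverse of the bijection $t\mapsto-\alpha u'(t)e^{-\alpha u(t)}$ from $\R$ onto $(-\infty,0)$, that unique zero is $t=I_{\alpha}(\lambda\rho(\omega))$. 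This both identifies the candidate $Y_{\alpha}=I_{\alpha}(\lambda\rho)$ and shows the multiplier must be negative.

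Next I would pin down $\lambda$ through the budget constraint. Set $\Phi(\lambda):=\BE[\rho I_{\alpha}(\lambda\rho)]$ for $\lambda\in(-\infty,0)$, which is finite by Assumption \ref{assump:I}. Because $I_{\alpha}$ is increasing and $\rho>0$, the integrand is increasing in $\lambda$ for each $\omega$, so $\Phi$ is strictly increasing; monotone convergence (applied to increasing sequences directly, and to decreasing ones after subtracting from an integrable $\rho I_{\alpha}(\lambda_{1}\rho)$) gives continuity of $\Phi$ on $(-\infty,0)$. Since $I_{\alpha}$ maps $(-\infty,0)$ onto all of $\R$, we have $I_{\alpha}(\lambda\rho)\to+\infty$ as $\lambda\to0^{-}$ and $I_{\alpha}(\lambda\rho)\to-\infty$ as $\lambda\to-\infty$ pointwise, whence monotone convergence yields $\Phi(0^{-})=+\infty$ and $\Phi(-\infty)=-\infty$. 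Thus $\Phi$ is a continuous strictly increasing bijection of $(-\infty,0)$ onto $\R$, and there is a unique $\lambda\in(-\infty,0)$ with $\Phi(\lambda)=-y$; for this $\lambda$, $Y_{\alpha}=I_{\alpha}(\lambda\rho)$ is feasible and lies in $\lone$ by Assumption \ref{assump:I}. I expect this step to be the main obstacle, since it rests on the delicate boundary behaviour of $\lambda\mapsto\BE[\rho I_{\alpha}(\lambda\rho)]$ that the authors explicitly flag as complicated.

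Finally I would verify optimality by the gradient inequality. For any feasible $Y$, convexity of $x\mapsto e^{-\alpha u(x)}$ gives pointwise $e^{-\alpha u(Y)}\geqslant e^{-\alpha u(Y_{\alpha})}+\lambda\rho(Y-Y_{\alpha})$, the slope at $Y_{\alpha}$ being $\lambda\rho$ by the first-order condition defining $I_{\alpha}$. The right-hand side is integrable, as $\BE[e^{-\alpha u(Y_{\alpha})}]=\vp(\alpha)<+\infty$ and $\rho Y,\rho Y_{\alpha}\in\lone$ with $\BE[\rho Y]=\BE[\rho Y_{\alpha}]=-y$; taking expectations, the cross term cancels and $\BE[e^{-\alpha u(Y)}]\geqslant\BE[e^{-\alpha u(Y_{\alpha})}]$. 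Hence $Y_{\alpha}$ is optimal and $\vp(\alpha)=\BE[e^{-\alpha u(Y_{\alpha})}]$, while strict convexity makes the inequality strict off $\{Y=Y_{\alpha}\}$, so the minimizer is almost surely unique.
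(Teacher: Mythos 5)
Your proposal is correct and takes essentially the same approach as the paper: existence of the negative multiplier $\lambda$ via monotone convergence and an intermediate-value argument, then optimality and uniqueness of $Y_{\alpha}=I_{\alpha}(\lambda\rho)$ by the standard Lagrange (gradient-inequality) method, which the paper invokes but explicitly leaves to the reader. One cosmetic slip: citing $\BE[e^{-\alpha u(Y_{\alpha})}]=\vp(\alpha)<+\infty$ to justify integrating the right-hand side is circular, but harmless --- since $\lambda\rho(Y-Y_{\alpha})$ is integrable with zero mean and $e^{-\alpha u(Y_{\alpha})}\geqslant 0$, taking expectations in the pointwise inequality already yields $\BE[e^{-\alpha u(Y)}]\geqslant\BE[e^{-\alpha u(Y_{\alpha})}]$, and finiteness of the latter then follows by choosing the feasible constant $Y=-y$.
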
 
Although problem $\PP_{\alpha}$ is completely solved in the above proposition, its relationship to problem \eqref{obj2} needs to be ascertained. \par
To answer this question, we need to study the properties of $\vp(\cdot)$. The first one is the continuity property of $\vp(\cdot)$ which will be proved through its convexity.
\begin{lemma}\label{v2convex}
 Fix $y\in(0,\hat{y})$.
The value function $\vp(\cdot)$ is convex on $[0,\infty)$ .
\end{lemma}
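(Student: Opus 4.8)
The plan is to prove the stronger statement that $\vp(\cdot)$ is \emph{logarithmically} convex on $[0,\infty)$; since every positive log-convex function is convex, this immediately yields the lemma. Passing to $\log\vp$ is the crucial device. A naive attempt — take a plain convex combination of the two optimizers and apply Jensen's inequality to $e^{-\alpha u(\cdot)}$ — is doomed, because expanding $\alpha_\theta\,u(Y_\theta)$ leaves a bilinear cross term proportional to $\theta(1-\theta)(\alpha_0-\alpha_1)\big(u(Y_1)-u(Y_0)\big)$ of indefinite sign. Log-convexity instead pairs perfectly with H\"older's inequality, provided the test point is weighted by the $\alpha$'s rather than uniformly.

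Concretely, I would fix $\alpha_0,\alpha_1\in[0,\infty)$ and $\theta\in(0,1)$, set $\alpha_\theta=(1-\theta)\alpha_0+\theta\alpha_1$, and first treat the case $\alpha_\theta>0$. Let $Y_0,Y_1\in\lone$ denote the optimal solutions of $\PP_{\alpha_0}$ and $\PP_{\alpha_1}$ supplied by Proposition \ref{obj3solution} (for a vanishing index I would take any feasible point, e.g.\ the constant $-y$, which is optimal since $\vp(0)=1$). I then introduce the weights $p=(1-\theta)\alpha_0/\alpha_\theta$ and $q=\theta\alpha_1/\alpha_\theta$, which satisfy $p,q\geqslant 0$ and $p+q=1$, and form the candidate $Y=pY_0+qY_1$. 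Because the budget constraint is linear, $\BE[\rho Y]=p(-y)+q(-y)=-y$ and $Y\in\lone$, so $Y$ is feasible for $\PP_{\alpha_\theta}$.

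The heart of the argument uses the concavity of $u$: $u(Y)\geqslant p\,u(Y_0)+q\,u(Y_1)$, hence $\alpha_\theta u(Y)\geqslant (1-\theta)\alpha_0 u(Y_0)+\theta\alpha_1 u(Y_1)$, and therefore, pointwise,
\[
e^{-\alpha_\theta u(Y)}\leqslant \big(e^{-\alpha_0 u(Y_0)}\big)^{1-\theta}\big(e^{-\alpha_1 u(Y_1)}\big)^{\theta}.
\]
Taking expectations and applying H\"older's inequality with conjugate exponents $1/(1-\theta)$ and $1/\theta$ gives
\[
\vp(\alpha_\theta)\leqslant \BE\big[e^{-\alpha_\theta u(Y)}\big]\leqslant \big(\BE[e^{-\alpha_0 u(Y_0)}]\big)^{1-\theta}\big(\BE[e^{-\alpha_1 u(Y_1)}]\big)^{\theta}=\vp(\alpha_0)^{1-\theta}\,\vp(\alpha_1)^{\theta},
\]
where the first inequality holds because $Y$ is feasible and the last equality is optimality of $Y_0,Y_1$. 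This is exactly $\log\vp(\alpha_\theta)\leqslant(1-\theta)\log\vp(\alpha_0)+\theta\log\vp(\alpha_1)$.

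Finally I would dispatch the routine remaining points: the degenerate case $\alpha_\theta=0$ forces $\alpha_0=\alpha_1=0$ and is trivial; the endpoints where some $\alpha_i=0$ are handled by the same computation (with $Y=Y_1$ when $\alpha_0=0$); and the passage from the geometric bound to the arithmetic bound $(1-\theta)\vp(\alpha_0)+\theta\vp(\alpha_1)$ is just the weighted AM--GM inequality, so ordinary convexity holds on the closed half-line $[0,\infty)$, including $0$ where $\vp(0)=1$. I expect the step requiring the most care to be the admissibility check — confirming $Y\in\lone$, that $Y$ satisfies the budget equality, and that $\vp(\alpha_0),\vp(\alpha_1)$ are finite so that H\"older genuinely applies — though all of these follow at once from $Y_0,Y_1\in\lone$, the finiteness of the two values, and the positivity of the integrands. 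The real conceptual crux, and the only nonobvious move, is the choice of the $\alpha$-weighted convex combination $Y=pY_0+qY_1$, which is precisely what makes the exponent factor into a H\"older product.
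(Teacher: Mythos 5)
Your proof is correct, and it rests on the same central device as the paper's --- the $\alpha$-weighted convex combination (your weights $p,q$ are exactly the paper's $\beta$ and $1-\beta$), combined with concavity of $u$ and monotonicity of $x\mapsto e^{-x}$ --- but the finishing step is genuinely different and yields a strictly stronger conclusion. The paper takes an \emph{arbitrary} pair of feasible solutions $Y$, $Y'$ of $\PP_{\alpha}$ and $\PP_{\alpha'}$, applies the two-point Jensen inequality $k e^{a}+(1-k)e^{b}\geqslant e^{k a+(1-k)b}$ pointwise under the expectation, and then uses feasibility of $\beta Y+(1-\beta)Y'$ to arrive directly at $k\,\BE[e^{-\alpha u(Y)}]+(1-k)\,\BE[e^{-\alpha' u(Y')}]\geqslant \vp\bigl(k\alpha+(1-k)\alpha'\bigr)$; taking infima over feasible pairs gives convexity. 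You instead take expectations first and apply H\"older, obtaining the multiplicative bound $\vp(\alpha_\theta)\leqslant \vp(\alpha_0)^{1-\theta}\vp(\alpha_1)^{\theta}$, i.e.\ log-convexity, and then recover the lemma by AM--GM. The trade-off: your conclusion is stronger, though nothing in the paper requires more than plain convexity (the uniqueness argument of Corollary \ref{v2=1uniqe} runs the same way under either property), whereas the paper's version is slightly more robust because it never invokes existence of optimizers. Your appeal to Proposition \ref{obj3solution} (and hence implicitly to Assumption \ref{assump:I}) is legitimate at this point in the paper, but it is avoidable: running your computation with $\epsilon$-optimal feasible solutions $Y_0,Y_1$ and letting $\epsilon\to 0$ gives the same H\"older bound without any attainment question, since your argument uses optimality of $Y_0,Y_1$ only through $\BE[e^{-\alpha_i u(Y_i)}]$ being close to $\vp(\alpha_i)$.
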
 
\begin{coro}
The function $\vp (\cdot)$ is nonnegative, finite and continuous on $(0,\infty)$.
\end{coro}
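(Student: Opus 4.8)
The plan is to dispatch the three assertions in turn, drawing nonnegativity and finiteness directly from the definition of $\vp$ and deducing continuity from the convexity already established in Lemma \ref{v2convex}.

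First I would record nonnegativity. For every feasible $Y$ and every $\alpha>0$ the integrand $e^{-\alpha u(Y)}$ is strictly positive, so $\BE[e^{-\alpha u(Y)}]\geqslant 0$; taking the infimum over feasible $Y$ gives $\vp(\alpha)\geqslant 0$ (indeed strictly positive).

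Next, for finiteness I would exhibit a single feasible point with finite objective value, exactly as remarked when $\vp$ was introduced. The constant $Y=-y$ lies in $\lone$, satisfies the budget constraint $\BE[\rho Y]=-y\,\BE[\rho]=-y$ since $\BE[\rho]=1$, and yields $\BE[e^{-\alpha u(-y)}]=e^{-\alpha u(-y)}<+\infty$. Hence $0\leqslant \vp(\alpha)\leqslant e^{-\alpha u(-y)}<+\infty$ for every $\alpha>0$, so $\vp$ is finite on $(0,\infty)$.

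Finally, continuity will follow with essentially no further work from Lemma \ref{v2convex}: since $\vp$ is convex on $[0,\infty)$ and, by the previous step, real-valued (finite) there, the standard fact that a finite convex function on an interval is continuous — in fact locally Lipschitz — on the interior of that interval applies. Applying this to the open interval $(0,\infty)$ gives the claimed continuity. The only point requiring care is that this continuity theorem needs finiteness of $\vp$, which is precisely why I establish finiteness first; beyond that there is no genuine obstacle, as all the substantive work has already been carried out in Lemma \ref{v2convex}.
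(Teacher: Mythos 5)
Your proposal is correct and follows essentially the same route as the paper: nonnegativity and finiteness come from the feasible constant solution $Y=-y$ (exactly as the paper remarks when $\vp$ is introduced), and continuity on $(0,\infty)$ follows from the convexity established in Lemma \ref{v2convex} together with the standard fact that a finite convex function is continuous on the interior of its domain. The only quibble is your parenthetical ``indeed strictly positive,'' which does not follow from pointwise positivity of the integrand alone (an infimum of positive quantities can vanish) --- but it is not needed for the statement.
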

The next result is dedicated to showing that the value function $\vp(\cdot)$ is strictly less than 1 near 0.
\begin{lemma} \label{vp<1}
 Fix $y\in(0,\hat{y})$.
There is $ \hat{\alpha}>0$ such that $\vp (\alpha)<1$ whenever $0<\alpha<\hat{\alpha}$.
\end{lemma}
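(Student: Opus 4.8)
The plan is to exploit the variational (infimum) structure of $\vp(\cdot)$ together with the {\blemma}. Since $\vp(\alpha)$ is the infimum over all feasible $Y$, any single feasible $Y_0$ yields the upper bound $\vp(\alpha)\leqslant\BE[e^{-\alpha u(Y_0)}]$; so it suffices to exhibit one feasible $Y_0$ for which $\BE[e^{-\alpha u(Y_0)}]<1$ on a right-neighbourhood of $0$, and then take $\hat{\alpha}$ to be the corresponding threshold.

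First I would select the comparison point. Because $y\in(0,\hat{y})$, Lemma \ref{setdemp2} guarantees $\sets_y\neq\emptyset$, so I can fix some $Y_0\in\sets_y$. By the definition of $\sets_y$, this $Y_0$ satisfies $\BE[\rho Y_0]=-y$ and $Y_0\in\lone$ (hence $Y_0$ is feasible for every $\PP_{\alpha}$), and crucially $u(Y_0)\in\seta$.

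Second, I would apply the {\blemma} to the outcome $X:=u(Y_0)\in\seta$. The lemma gives $\hat{\alpha}:=\sup\{\alpha\geqslant 0:\BE[e^{-\alpha X}]\leqslant 1\}\in(0,+\infty)$, and, since $\BE[u(Y_0)]>0$ forces $\BP(u(Y_0)\neq 0)>0$, it further gives $\BE[e^{-\alpha u(Y_0)}]<1$ for every $0<\alpha<\hat{\alpha}$. Combining this with the upper bound from the first paragraph yields $\vp(\alpha)\leqslant\BE[e^{-\alpha u(Y_0)}]<1$ for all $0<\alpha<\hat{\alpha}$, which is exactly the claim.

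The argument is short precisely because the $\seta$-membership of $u(Y_0)$ packages the two delicate points: finiteness of the loss exponential moment, so that $\BE[e^{-\alpha u(Y_0)}]<+\infty$ near $0$ via $\seta\subseteq\mgf$, and strict positivity of $\BE[u(Y_0)]$. Thus the only real work is locating $Y_0$ with $u(Y_0)\in\seta$, which is the content of Lemma \ref{setdemp2} and is where the hypothesis $y<\hat{y}$ is consumed. As a cross-check one could instead argue directly that $g(\alpha):=\BE[e^{-\alpha u(Y_0)}]$ satisfies $g(0)=1$ and has right-derivative $g'(0^+)=-\BE[u(Y_0)]<0$; the main obstacle in that route is justifying differentiation under the expectation near $\alpha=0$, which again requires the finite-exponential-moment property supplied by $u(Y_0)\in\mgf$. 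The {\blemma} route sidesteps this technicality entirely, which is why I would prefer it.
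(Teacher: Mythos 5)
Your proof is correct and follows essentially the same route as the paper: pick $Y_0\in\sets_y$ (nonempty since $y<\hat{y}$), apply the {\blemma} to $u(Y_0)\in\seta$ to get $\BE[e^{-\alpha u(Y_0)}]<1$ near $0$, and bound $\vp(\alpha)$ by this via the infimum structure. Your citation of Lemma \ref{setdemp2} for nonemptiness of $\sets_y$ is in fact slightly more precise than the paper's own citation of Proposition \ref{vfinite}.
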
 
From the above lemma, $\vp (0+)\leqslant 1=\vp (0)$. It is natural to ask whether $\vp (\cdot)$ is continuous at $0$. The answer is not true in general. This is very different from the answer to the value function $V(\cdot)$ of problem \eqref{obj1}. See the example in the following section.
\par
We have the following important byproduct.
\begin{coro}\label{v2=1uniqe}
The equation $\vp (\alpha)=1$ admits at most one positive solution.
\end{coro}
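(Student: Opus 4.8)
The plan is to exploit the convexity of $\vp(\cdot)$ established in Lemma \ref{v2convex} together with the strict dip below $1$ near the origin supplied by Lemma \ref{vp<1}. The only three facts I need are that $\vp$ is convex on $[0,\infty)$, that $\vp(0)=1$ (its definition), and that there is some $\hat{\alpha}>0$ with $\vp(\alpha)<1$ for every $\alpha\in(0,\hat{\alpha})$. Note that continuity is not even required here; the argument is purely a pointwise convexity estimate.

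First I would argue by contradiction, assuming the equation $\vp(\alpha)=1$ has two distinct positive solutions $\alpha_1<\alpha_2$. By Lemma \ref{vp<1} the value $\vp$ is \emph{strictly} below $1$ throughout $(0,\hat{\alpha})$, so no point of that interval can be a root; hence every positive root must satisfy $\alpha\geqslant\hat{\alpha}$, and in particular $\alpha_1\geqslant\hat{\alpha}$. Next I would introduce an auxiliary point $\alpha_0\in(0,\hat{\alpha})$, which by the same lemma satisfies $\vp(\alpha_0)<1$, and which by construction obeys $0<\alpha_0<\hat{\alpha}\leqslant\alpha_1<\alpha_2$. The decisive step is then to express the middle root as a convex combination of the two outer points, $\alpha_1=t\,\alpha_0+(1-t)\,\alpha_2$ for a unique $t\in(0,1)$, and to apply convexity: $\vp(\alpha_1)\leqslant t\,\vp(\alpha_0)+(1-t)\,\vp(\alpha_2)<t+(1-t)=1$, where the strict inequality comes precisely from $\vp(\alpha_0)<1$ while $\vp(\alpha_2)=1$. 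This contradicts $\vp(\alpha_1)=1$, so at most one positive solution can exist.

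There is no genuine hard obstacle in this corollary; the whole content is the elementary fact that a convex function which lies strictly below a level at some point can meet that level at most once to the right of that point. The one subtlety to be careful about is ensuring that every candidate root lies strictly to the right of a point at which $\vp$ is strictly below $1$, and this is exactly what Lemma \ref{vp<1} guarantees. Without such a strict dip the conclusion would genuinely fail, since a convex function is permitted to equal a constant on a whole interval; thus invoking Lemma \ref{vp<1} is the essential ingredient rather than a cosmetic one.
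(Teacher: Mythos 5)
Your proof is correct and follows essentially the same route as the paper: pick an auxiliary point $\alpha_0\in(0,\hat{\alpha})$ where Lemma \ref{vp<1} gives $\vp(\alpha_0)<1$, write the smaller root as a convex combination of $\alpha_0$ and the larger root, and let convexity force $\vp$ at the smaller root strictly below $1$, a contradiction. The only difference is cosmetic: the paper simply chooses $\alpha_0<\alpha_1$ directly rather than first observing that all roots lie in $[\hat{\alpha},\infty)$.
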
 
The following result shows that the value function $\vp(\cdot)$ can be arbitrary large.
\begin{lemma}\label{vplowerb}
For each $\alpha>0$, $\vp (\alpha)\geqslant e^{\alpha u'(0)y-\BE[\rho\ln(\rho)]}.$
\end{lemma}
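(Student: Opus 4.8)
The plan is to bound the integrand below by its ``linearised'' version and then reduce to a single change-of-measure estimate. First I would exploit that $u$ is concave with $u(0)=0$: the tangent line at the origin gives $u(Y)\leqslant u'(0)\,Y$ pointwise, so, since $\alpha>0$, $e^{-\alpha u(Y)}\geqslant e^{-\alpha u'(0) Y}$ for every $Y$. Taking expectations, $\BE[e^{-\alpha u(Y)}]\geqslant \BE[e^{-\alpha u'(0) Y}]$ for each feasible $Y$, so it suffices to bound $\BE[e^{-\alpha u'(0) Y}]$ from below uniformly over the budget set $\{Y\in\lone:\BE[\rho Y]=-y\}$.

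Next I would pass to the ``risk-neutral'' measure $\BQ$ defined by $\dd\BQ/\dd\BP=\rho$, which is legitimate because $\rho>0$ and $\BE[\rho]=1$. Writing $\beta=\alpha u'(0)>0$ and $\Eq$ for expectation under $\BQ$, the budget constraint becomes $\Eq[Y]=\BE[\rho Y]=-y$, and $\BE[e^{-\beta Y}]=\Eq[\rho^{-1}e^{-\beta Y}]=\Eq\big[\exp(-\ln(\rho)-\beta Y)\big]$. Applying Jensen's inequality under $\BQ$ to the convex function $\exp$ then yields $\Eq\big[\exp(-\ln(\rho)-\beta Y)\big]\geqslant \exp\big(\Eq[-\ln(\rho)-\beta Y]\big)=\exp\big(-\BE[\rho\ln(\rho)]+\beta y\big)$, where I used $\Eq[\ln(\rho)]=\BE[\rho\ln(\rho)]$ and $\Eq[Y]=-y$. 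Substituting $\beta=\alpha u'(0)$ gives $\BE[e^{-\alpha u(Y)}]\geqslant e^{\alpha u'(0)y-\BE[\rho\ln(\rho)]}$ for every feasible $Y$; taking the infimum over $Y$ produces exactly $\vp(\alpha)\geqslant e^{\alpha u'(0)y-\BE[\rho\ln(\rho)]}$.

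The only delicate point --- and the step I would treat most carefully --- is the integrability needed to invoke Jensen's inequality, namely that $-\ln(\rho)-\beta Y$ has a finite $\BQ$-mean. The $Y$-term is immediate, since $\BE[\rho Y]=-y$ is finite by feasibility, so $Y$ is $\BQ$-integrable. For the $\ln(\rho)$-term I would use that $x\mapsto x\ln x$ attains its minimum $-1/e$ on $(0,\infty)$, so $(\rho\ln(\rho))^-\leqslant 1/e$ is bounded; together with the standing assumption $\BE[\rho\ln(\rho)]<+\infty$ this gives $\BE[\rho|\ln(\rho)|]<+\infty$, i.e.\ $\ln(\rho)$ is $\BQ$-integrable. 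Hence $\Eq[-\ln(\rho)-\beta Y]$ is finite and Jensen applies (the bound holds trivially when the left-hand side is $+\infty$). I expect no further difficulty: the estimate is sharp in the linear case $u(x)=x$, where it reproduces the relative-entropy term $\BE[\rho\ln(\rho)]$ of the pricing kernel anticipated in the introduction.
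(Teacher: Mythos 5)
Your proof is correct, and while it opens exactly as the paper does --- using concavity and $u(0)=0$ to get $u(x)\leqslant u'(0)x$, hence $\BE[e^{-\alpha u(Y)}]\geqslant\BE[e^{-\alpha u'(0)Y}]$ --- it then diverges in how it bounds the linearised problem. The paper solves that exponential-utility problem \emph{exactly}: it invokes the standard Lagrange method to produce the explicit minimiser $Y^*=-y+\frac{1}{\alpha u'(0)}\bigl(\BE[\rho\ln(\rho)]-\ln(\rho)\bigr)$ and reads off the optimal value $e^{\alpha u'(0)y-\BE[\rho\ln(\rho)]}$, so the lower bound is the attained infimum. You instead bypass the optimisation entirely: changing measure to $\BQ$ with $\dd\BQ/\dd\BP=\rho$ turns the budget constraint into $\Eq[Y]=-y$, and one application of Jensen's inequality to $\Eq[\exp(-\ln(\rho)-\alpha u'(0)Y)]$ gives the same bound uniformly over the feasible set. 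Each route buys something: the paper's version identifies the optimiser and shows the bound is sharp (it is precisely the risk-neutral solution appearing in the Examples subsection), whereas yours is more self-contained and rigorous on the points the paper glosses over --- you never need to verify optimality of the Lagrange candidate, and you explicitly check the integrability needed for Jensen (that $\BE[\rho Y]=-y$ makes $Y$ $\BQ$-integrable, and that $(\rho\ln\rho)^-\leqslant 1/e$ together with $\BE[\rho\ln(\rho)]<+\infty$ makes $\ln(\rho)$ $\BQ$-integrable), a step the paper's proof leaves implicit. Since only the lower bound, not the exact value, is needed for the lemma, your argument fully suffices.
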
 
Since $\vp(\alpha)$ is less than 1 when $\alpha$ is close to zero and bigger than 1 when $\alpha>\frac{yu'(0)}{\BE[\rho\ln(\rho)] }$, so by the concavity, we conclude that
\begin{coro}
The equation $\vp(\alpha)=1$ admits a unique positive solution.
\end{coro}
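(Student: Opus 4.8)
The plan is to split the claim into existence and uniqueness, treating the two parts with different tools. Uniqueness is already free: Corollary \ref{v2=1uniqe} records that $\vp(\alpha)=1$ has at most one positive root, so nothing more is needed there. For existence I would run a single intermediate-value argument on $(0,\infty)$: exhibit one point at which $\vp$ sits strictly below $1$, another at which it sits strictly above $1$, and then invoke continuity of $\vp$ to produce a crossing. Thus the whole corollary is really a repackaging of the preceding lemmas, and I would present it as such.

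Concretely, for the lower side Lemma \ref{vp<1} supplies a threshold $\hat{\alpha}>0$ with $\vp(\alpha)<1$ for all $\alpha\in(0,\hat{\alpha})$, so there is in particular some $\alpha_0>0$ with $\vp(\alpha_0)<1$. For the upper side Lemma \ref{vplowerb} gives $\vp(\alpha)\geqslant e^{\alpha u'(0)y-\BE[\rho\ln(\rho)]}$ for every $\alpha>0$; the exponent is an affine increasing function of $\alpha$ with positive slope $u'(0)y$, so it becomes positive once $\alpha u'(0)y>\BE[\rho\ln(\rho)]$, equivalently $\alpha>\BE[\rho\ln(\rho)]/(u'(0)y)$, and for such $\alpha$ one gets $\vp(\alpha)>1$. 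This yields $\alpha_1>\alpha_0$ with $\vp(\alpha_1)>1$. Finally, the corollary to Lemma \ref{v2convex} tells us $\vp$ is continuous on $(0,\infty)$, so the intermediate value theorem delivers $\alpha^{*}\in(\alpha_0,\alpha_1)$ with $\vp(\alpha^{*})=1$. Combining this with Corollary \ref{v2=1uniqe} gives exactly one positive solution. Note that convexity is not actually needed for existence beyond what is already distilled into the uniqueness corollary; plain continuity and the two sign conditions suffice.

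No step is genuinely hard here, so the only point deserving care is the upper bound. The slope $u'(0)y$ must be strictly positive and $\BE[\rho\ln(\rho)]$ must be finite, otherwise the bound in Lemma \ref{vplowerb} might fail to exceed $1$ and the forced crossing above $1$ would break down. Both facts hold in this setting: $y>0$ by assumption, $u$ is strictly increasing and concave with $u(0)=0$, which rules out $u'(0)=0$, and $\BE[\rho\ln(\rho)]\in(0,\infty)$ by Jensen's inequality together with the standing integrability hypothesis $\BE[\rho\ln(\rho)]<+\infty$. With these confirmed, the existence–uniqueness dichotomy closes without any further computation.
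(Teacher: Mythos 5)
Your proof is correct and follows essentially the same route as the paper: $\vp<1$ near zero (Lemma \ref{vp<1}), $\vp>1$ for large $\alpha$ (Lemma \ref{vplowerb}), continuity of $\vp$ on $(0,\infty)$ from its convexity, an intermediate-value crossing, and uniqueness from Corollary \ref{v2=1uniqe}. You even state the crossing threshold correctly as $\alpha>\BE[\rho\ln(\rho)]/(u'(0)y)$, where the paper's informal sentence inadvertently writes the reciprocal $\frac{yu'(0)}{\BE[\rho\ln(\rho)]}$.
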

Now we are ready to reveal the relationship between problem $\PP_{\alpha}$ and problem \eqref{obj2}.
\begin{prop}\label{v2->v}
Fix $y\in(0,\hat{y})$.
Suppose $\vp (\alpha^*)=1$ for some $\alpha^*>0$. Let $Y^*$ be the optimal solution to problem $\PP_{\alpha^*}$. Then the pair $(\alpha^*, Y^*)$ is an optimal solution to problem \eqref{obj2}.
\end{prop}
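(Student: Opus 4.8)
The plan is to recognize problem \eqref{obj2} as a constrained maximization of $\alpha$ and to show two things: that the proposed pair $(\alpha^*, Y^*)$ is feasible, and that its objective value $\alpha^*$ cannot be beaten by any other feasible pair. The whole argument will rest on translating the feasibility constraint $\BE[e^{-\alpha u(Y)}]\le 1$ of \eqref{obj2} into a statement about $\vp(\alpha)$, and then reading off the answer from the shape of $\vp$.

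Feasibility I expect to be immediate. Since $Y^*$ is the optimal solution of $\PP_{\alpha^*}$ supplied by Proposition \ref{obj3solution}, it already satisfies the budget constraint $\BE[\rho Y^*]=-y$ with $Y^*\in\lone$, and its objective value is $\BE[e^{-\alpha^* u(Y^*)}]=\vp(\alpha^*)=1$ by hypothesis. In particular the inequality constraint $\BE[e^{-\alpha^* u(Y^*)}]\le 1$ holds, so $(\alpha^*, Y^*)$ is a feasible point of \eqref{obj2}, which is all that is needed here.

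The substance is optimality, and the first reduction I would make is the following. Take any feasible pair $(\alpha, Y)$ for \eqref{obj2}, so that $\BE[\rho Y]=-y$, $Y\in\lone$, and $\BE[e^{-\alpha u(Y)}]\le 1$. Then $Y$ is admissible in $\PP_{\alpha}$, and since $\vp(\alpha)$ is by definition the infimum of $\BE[e^{-\alpha u(\cdot)}]$ over exactly such competitors, I get $\vp(\alpha)\le \BE[e^{-\alpha u(Y)}]\le 1$. Thus the feasibility of $(\alpha,Y)$ in \eqref{obj2} forces $\vp(\alpha)\le 1$, and the whole problem reduces to showing that $\vp(\alpha)\le 1$ implies $\alpha\le\alpha^*$.

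This last implication is where the structural properties of $\vp$ must be used, and it is the step I expect to require the most care. I would argue by contradiction: suppose $\alpha>\alpha^*$. Writing $\alpha^*$ as the convex combination $\alpha^*=(1-t)\cdot 0+t\cdot\alpha$ with $t=\alpha^*/\alpha\in(0,1)$, the convexity of $\vp$ on $[0,\infty)$ from Lemma \ref{v2convex} together with the convention $\vp(0)=1$ yields $1=\vp(\alpha^*)\le (1-t)+t\,\vp(\alpha)$, which rearranges to $\vp(\alpha)\ge 1$. Combined with the already-established $\vp(\alpha)\le 1$, this forces $\vp(\alpha)=1$, so $\alpha$ would be a positive root of $\vp(\cdot)=1$ distinct from $\alpha^*$, contradicting the uniqueness in Corollary \ref{v2=1uniqe}. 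Hence every feasible $\alpha$ satisfies $\alpha\le\alpha^*$, so $\alpha^*$ is the optimal value of \eqref{obj2} and it is attained by the feasible pair $(\alpha^*, Y^*)$, which completes the proof.
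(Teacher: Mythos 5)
Your proposal is correct and follows essentially the same route as the paper: feasibility is immediate from $\vp(\alpha^*)=1$, and optimality is obtained by noting that any feasible $(\alpha,Y)$ with $\alpha>\alpha^*$ forces $\vp(\alpha)\leqslant 1$ via $\PP_\alpha$, while convexity of $\vp$ (Lemma \ref{v2convex}) forces $\vp(\alpha)\geqslant 1$, so $\vp(\alpha)=1$, contradicting the uniqueness of the root in Corollary \ref{v2=1uniqe}. The only cosmetic difference is that you anchor the convex-combination inequality at $\vp(0)=1$, whereas the paper anchors it at $\vp(0+)$ and bounds $\vp(0+)\leqslant 1$; both are justified by the stated lemmas.
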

\begin{proof}
Since $\vp (\alpha^*)=1$, $(\alpha^*, Y^*)$ is a feasible solution of problem \eqref{obj2}. Suppose it is not optimal.
Then there is a pair $(\alpha , Y )$ satisfying $\BE[e^{-\alpha u(Y )}]\leqslant 1$, $\BE[\rho Y ]=-y$, $Y \in \lone$, and $\alpha >\alpha^*>0$.
This $Y $ is a feasible solution of problem $\PP_{\alpha }$, so $\vp (\alpha )\leqslant\BE[e^{-\alpha u(Y )}] \leqslant 1.$
On the other hand, by the convexity of $\vp (\cdot)$,
$$ 1=\vp (\alpha^*)\leqslant \frac{\alpha -\alpha^*}{\alpha }\vp (0+)+\frac{ \alpha^*}{\alpha }\vp (\alpha )\leqslant \frac{\alpha -\alpha^*}{\alpha }+\frac{ \alpha^*}{\alpha }\vp (\alpha ),\quad 1\leqslant \vp (\alpha ).$$
So $\vp (\alpha )=1$. That $\vp (\alpha )=\vp (\alpha^*)=1$ contradicts the claim of Corollary \ref{v2=1uniqe}.
\end{proof}
Putting all of the results obtained thus far together, we obtain the complete solution of problem \eqref{obj1}.
\begin{thm} Fix $y\in(0,\hat{y})$.
Then $\vp (\alpha)=\BE[e^{-\alpha u(Y_{\alpha})}]=1$ admits a unique positive solution
$\alpha^*>0$, where $Y_{\alpha}$ is given by \eqref{y-alpha}. Moreover, $(\alpha^*, Y_{\alpha^*})$ is an optimal solution to problem \eqref{obj2}, and $Y_{\alpha^*}$ is an optimal solution to problem \eqref{obj1} with the optimal value $1/\alpha^*$, and $0<\alpha^*\leqslant \frac{\BE[\rho\ln(\rho)] }{yu'(0)}$.
\end{thm}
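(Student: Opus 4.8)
The plan is to assemble this final theorem from the chain of auxiliary results already established; almost every ingredient is in place, so the task is chiefly one of synthesis. First I would settle existence and uniqueness of the positive root $\alpha^*$ of $\vp(\alpha)=1$. Uniqueness is immediate from Corollary \ref{v2=1uniqe}. For existence I would combine Lemma \ref{vp<1}, which gives $\vp(\alpha)<1$ for all sufficiently small $\alpha>0$, with Lemma \ref{vplowerb}, which forces $\vp(\alpha)>1$ as soon as $\alpha>\BE[\rho\ln(\rho)]/(yu'(0))$; since $\vp(\cdot)$ is continuous on $(0,\infty)$ by the corollary following Lemma \ref{v2convex}, the intermediate value theorem supplies a root. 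This is precisely the content of the corollary stated just after Lemma \ref{vplowerb}, so the first assertion of the theorem is secured.

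Next I would identify $\alpha^*$ with an optimal solution of the bridge problem \eqref{obj2}. By Proposition \ref{obj3solution}, the point $Y_{\alpha^*}=I_{\alpha^*}(\lambda\rho)$ is the unique optimal solution of $\PP_{\alpha^*}$ and its value is $\vp(\alpha^*)=\BE[e^{-\alpha^* u(Y_{\alpha^*})}]=1$ by construction. Proposition \ref{v2->v} then applies directly (its only hypotheses being $y\in(0,\hat{y})$ and $\vp(\alpha^*)=1$ with $\alpha^*>0$), allowing me to conclude that the pair $(\alpha^*,Y_{\alpha^*})$ is an optimal solution of \eqref{obj2}, which is the second assertion.

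To transfer this back to the original problem I would invoke the equivalence of Proposition \ref{obj1=obj2}. Applying its forward direction to the optimal pair $(\alpha^*,Y_{\alpha^*})$ shows that $Y_{\alpha^*}$ is optimal for problem \eqref{obj1} and that $\alpha^*=1/\Rin(u(Y_{\alpha^*}))$ with $0<\alpha^*<+\infty$; hence the optimal value of \eqref{obj1} is $\Rin(u(Y_{\alpha^*}))=1/\alpha^*$. The upper bound on $\alpha^*$ then falls out of Lemma \ref{vplowerb} evaluated at the root: from $1=\vp(\alpha^*)\geqslant e^{\alpha^* u'(0)y-\BE[\rho\ln(\rho)]}$, using $u'(0)>0$ (which holds because a concave strictly increasing function has strictly positive derivative) and $y>0$, taking logarithms yields $\alpha^* u'(0)y\leqslant\BE[\rho\ln(\rho)]$, i.e. $\alpha^*\leqslant\BE[\rho\ln(\rho)]/(yu'(0))$.

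Since every step is a direct citation of a result already proved under the same standing hypothesis $y\in(0,\hat{y})$, there is no genuine analytic obstacle remaining. The only points needing care are routine consistency checks: that the root $\alpha^*$ from the first step is finite and strictly positive, so that Propositions \ref{v2->v} and \ref{obj1=obj2} are applicable, and that the value $1/\alpha^*$ returned by Proposition \ref{obj1=obj2} is indeed the minimal value of the objective $\Rin(u(\cdot))$ in \eqref{obj1}; both are immediate. If one had to name the substantive link in the chain, it would be the passage from the single-variable problems $\PP_\alpha$ to the bridge problem \eqref{obj2} through the convexity of $\vp$ embodied in Proposition \ref{v2->v}, but that work is already done.
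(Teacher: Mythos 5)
Your proposal is correct and follows essentially the same route as the paper, which states this theorem as a direct assembly of the preceding results: existence and uniqueness of the root $\alpha^*$ from Lemma \ref{vp<1}, Lemma \ref{vplowerb} and the convexity/continuity of $\vp(\cdot)$, optimality of $(\alpha^*,Y_{\alpha^*})$ for problem \eqref{obj2} via Propositions \ref{obj3solution} and \ref{v2->v}, transfer to problem \eqref{obj1} via the forward direction of Proposition \ref{obj1=obj2}, and the bound $\alpha^*\leqslant \BE[\rho\ln(\rho)]/(yu'(0))$ by evaluating Lemma \ref{vplowerb} at $\alpha^*$. Your consistency checks (finiteness and positivity of $\alpha^*$, and $u'(0)>0$) are exactly the routine verifications the paper leaves implicit.
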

\par
The optimal value of problem \eqref{obj1} is lower bounded by $\frac{yu'(0)}{\BE[\rho\ln(\rho)] }$, which is increasing in $y$. According to Lemma \ref{vconvex}, the bigger the surplus level, the higher the investment risk; the growth rate by the above result is also at least $\frac{u'(0)}{\BE[\rho\ln(\rho)] }$. If the surplus level is too large, or to be more precise, is larger than $\hat{y}$, then no risk-averse agent will enter the market.

\subsection{Examples}
\noindent
In this section, we present two important examples to illustrate the main result of this paper.
\begin{example}
Risk-neutral agent. In this case, $u(x)=x$ and $\hat{y}=+\infty$.\footnote{Here we assume that $\mathrm{essinf}\:\rho=0$.} Then $I_{\alpha}(x)=- \frac{1}{\alpha}\ln(-\frac{x}{\alpha})$. The optimal solution to problem $\PP_{\alpha }$ is $I_{\alpha}( \lambda \rho )= -y+\frac{1}{\alpha}(\BE[\rho\ln(\rho)]-\ln(\rho))$ and optimal value is $\vp(\alpha)=\BE[e^{-\alpha u(I_{\alpha}( \lambda \rho ))}]=e^{\alpha y-\BE[\rho\ln(\rho)]}$. From $\vp (\alpha^*)=1$, we obtain $\alpha^*=\BE[\rho\ln(\rho)]/y>0$. The optimal solution to problem \eqref{obj1} is $\frac{-\ln(\rho)y}{\BE[\rho\ln(\rho)]}$ and the optimal value of problem \eqref{obj1} is $V(y)=1/\alpha^*=\frac{y}{\BE[\rho\ln(\rho)]}$, which is continuous on $[0,\infty)$. However, the value function $\vp (\cdot)$ is not continuous at 0 since $\vp (0+)=e^{-\BE[\rho\ln(\rho)]}<1=\vp (0)$.
In this case, the investment risk is proportional to the surplus level.
\end{example} 
 \begin{example}
 Risk-averse agent. Let $u(x)=1-e^{-\beta x}$, $\beta>0$ and $\ln(\rho)$ be normally distributed with mean $\mu$ and variance $\sigma^2$. Then we deduce $\mu=-\frac{1}{2}\sigma^2$ from $\BE[\rho]=1$. It is not hard to derive  
 $\hat{y}=\frac{\sigma^2 }{2\beta}$.
 In this case, $I_{\alpha}(\cdot)$ is the unique function satisfying $\alpha e^{-\beta I_{\alpha}(x)}-\beta I_{\alpha}(x)=\alpha+\ln(-x)-\ln(\alpha)-\ln(\beta).$
 Let $W(\cdot)$ be the Lambert function, which is the unique function satisfying $ W(x)e^{W(x)}=x$ on $[0,+\infty)$. Then $I_{\alpha}(x)=\frac{1}{\beta}(
 W(-\frac{x}{\beta}e^{\alpha})-\ln(-\frac{x}{\beta}e^{\alpha})+\ln(\alpha))$.
 The following picture shows the relationship between the risk and surplus level.
 \begin{itemize}
\item[] \includegraphics[width=0.9\textwidth]{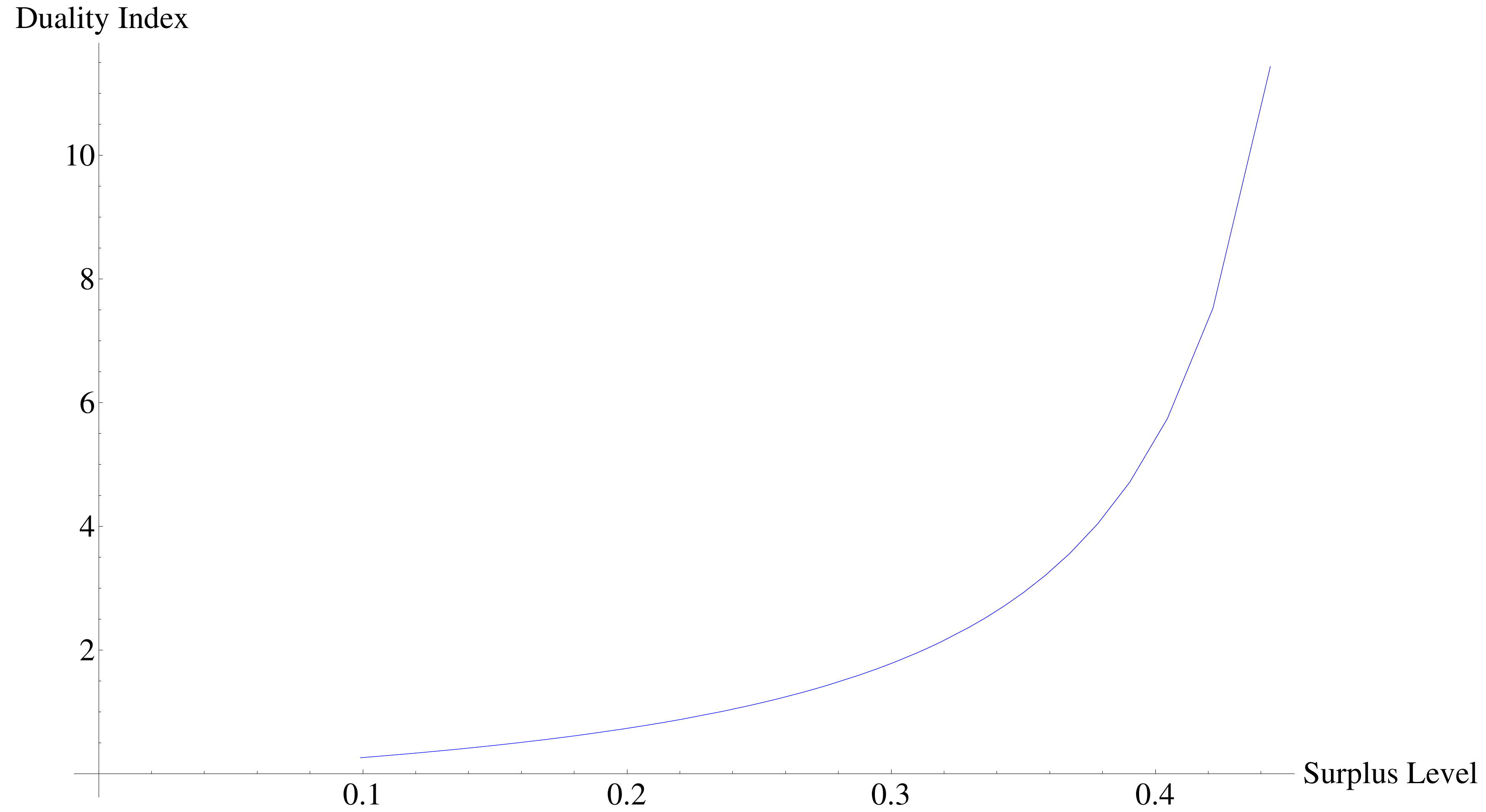}
\end{itemize}
Where $\sigma^2=\beta=1$.
 From the picture, it can be seen that the risk will go to infinity as the surplus level $y$ goes to $\hat{y}=\frac{\sigma^2 }{2\beta}=0.5$, and to 0 as $y$ to 0.
 \end{example}
 \begin{remark}
 The more risk-averse (i.e., the bigger the $\beta$) the agent , the less (i.e., the smaller the $\hat{y}$) his/her risk tolerance.
 \end{remark}

\section{Concluding Remarks}
\noindent
In this paper,   a nontrivial, nonnegative, real-valued index for general outcomes is defined. It is a generalization of the index introduced by Aumann and Serrano (2008). It retains many properties of the original index including sub-additivity, law-invariance, convexity and monotonicity as well as continuity.  It is also the unique nontrivial index that satisfies both the duality and positive homogeneity axioms.
\par
A portfolio selection problem is then considered in a complete market setting. The problem is completely solved by  linking it to a series of Merton's optimum consumption-like problems via a bridge problem. The problem is equivalent to one of the series, which is revealed by an explicit algebraic equation.
\par
In an incomplete market setting, the problem will be much harder to tackle and a new method   needs to be adopted.
Optimal stopping problem under the duality index is also interesting. These problems will be addressed in the forthcoming papers.

\section*{Appendix}
\appendix

\section{Proof of Basic Lemma}
\noindent
By the monotonic convergence theorem, we have $$\lim\limits_{\alpha\to\hat{\alpha}- } \BE[e^{\alpha X^-}\mathbf{1}_{\{X<0\}}]=\BE[e^{\hat{\alpha} X^-}\mathbf{1}_{\{X<0\}}],$$ and by the dominated convergence theorem, we have $$\lim\limits_{\alpha\to\hat{\alpha}- } \BE[e^{-\alpha X^+}\mathbf{1}_{\{X\geqslant 0\}}]=\BE[e^{-\hat{\alpha} X^+}\mathbf{1}_{\{X\geqslant 0\}}].$$
Adding them up, noticing the definition of $\hat{\alpha}$, we obtain
\begin{align}\label{fcontinuous}
1\geqslant \liminf\limits_{\alpha\to\hat{\alpha}- } \BE[e^{-\alpha X } ]=\BE[e^{-\hat{\alpha} X}].
\end{align}
If $\BP(X\neq 0)>0$, set $f(\alpha)=\BE[e^{-\alpha X}]$. Then $f(\cdot)$ on $[0,\hat{\alpha}]$ is a strictly convex function satisfying $f(0)=1$, $f(\hat{\alpha} )\leqslant 1$. Therefore, $f(\alpha)<\frac{\hat{\alpha} -\alpha}{\hat{\alpha} }f(0)+\frac{\alpha}{\hat{\alpha} }f (\hat{\alpha} )\leqslant 1$, whenever $0<\alpha<\hat{\alpha} $.
Since every convex function is continuous in the interior of its domain, we obtain that $f(\cdot)$ is continuous on $(0,\hat{\alpha})$.
By \eqref{fcontinuous}, we see that $f(\cdot)$ is continuous at $\hat{\alpha}$ too.
By the definition of $\hat{\alpha} $, we have that $\BE[e^{-\alpha X}]>1$ whenever $\alpha>\hat{\alpha} $.
\par
If $X\in\seta$, then $\hat{\alpha}>0$ by the definition of $\mgf$. We also have that $\hat{\alpha}<+\infty$. Otherwise $\BE[e^{-\hat{\alpha} X}]\geqslant \BE[e^{+\infty}\mathbf{1}_{\{X< 0\}}]=+\infty$.
\par
If $X\in\setb$, then $\hat{\alpha}=+\infty$ by the definition.
\par
Suppose $\hat{\alpha}>0$ for some $X\in\setc$. Then by the convexity of $f(\cdot)$, we have, for every $\alpha>0$, $\frac{f(\alpha)-f(0)}{\alpha-0}\geqslant f'(0+)=-\BE[X]\geqslant 0$, which contradicts $f(\alpha)=\BE[e^{-\alpha X}]<1$ whenever $0<\alpha<\hat{\alpha} $. We conclude that $\hat{\alpha}=0$ for every $X\in\setc$.
\par
\par
If $X\in\setd$, then $\BE[X^-]=+\infty.$ So $X\notin\mgf$ since all the moments of $X^-$ are finite when $X\in\mgf$.
Therefore, $\BE[e^{-\alpha X } ]=+\infty$ for all $\alpha>0$ and $\hat{\alpha}=0$ by the definition.
\par
 That $f(\cdot)$ is continuous at $0$ whenever $\hat{\alpha}>0$ can be proved by the same idea as proving \eqref{fcontinuous}.

\section{Proof of Proposition \ref{indexproperties}}
\noindent
The proof is very similar to that in Aumann and Serrano (2008). However, our definition of the {\dindex} is different from the original one. So we give the proof here.
\begin{enumerate}
 \item If one of $X$ and $Y$ belongs $\setc\cup\setd$, then $\Rin(X)+\Rin(Y)=+\infty\geqslant \Rin(X+Y)$ (which is defined as $+\infty$ whenever $X+Y\notin\setl$). If one of $X$ and $Y$ belongs to $\setb$, say $Y$. Then $X+Y\geqslant X$ almost surely (a.s.). By the monotonic property which will be proved below, we have $\Rin(X+Y)\leqslant \Rin(X)=\Rin(X)+\Rin(Y).$ Now suppose both $X$ and $Y$ belong to $\seta$. Let $\alpha_1=\sup\{\alpha\geqslant 0: \BE[e^{-\alpha X}]\leqslant 1\}$ and $\alpha_2=\sup\{\alpha\geqslant 0: \BE[e^{-\alpha Y}]\leqslant 1\}$. Then $0<\alpha_1, \alpha_2<+\infty$. Set $k=\frac{\alpha_2}{\alpha_1 +\alpha_2 }\in(0,1)$. Then $\frac{\alpha_1\alpha_2}{\alpha_1+\alpha_2}(X+Y)=k\alpha_1 X+(1-k)\alpha_2 Y.$ By the convexity of the exponential function, we have $\BE[e^{-\frac{\alpha_1\alpha_2}{\alpha_1+\alpha_2}(X+Y)}]=\BE[e^{-k\alpha_1 X-(1-k)\alpha_2 Y}]\leqslant k\BE[e^{-\alpha_1 X}]+(1-k)\BE[e^{-\alpha_2Y}]\leqslant 1.$ By the definition of the {\dindex}, $\Rin(X+Y)\leqslant \frac{\alpha_1+\alpha_2}{\alpha_1\alpha_2}= {\alpha_1^{-1}+\alpha_2^{-1}}=\Rin(X)+\Rin(Y)$.
 \item This follows immediately from the definition.
 \item This follows from the above two properties.
 \item This is evident.
 \item Because the {\dindex} is law-invariant, it is sufficient to prove the second-order case. Suppose $Y$ second-order stochastic dominates $X$. Then for each $\alpha>0$, the mapping $x\mapsto e^{-\alpha x}$ is decreasing and convex, so $\BE[e^{-\alpha Y}]\leqslant \BE[e^{-\alpha X}]$. Therefore, $\sup\{\alpha\geqslant 0: \BE[e^{-\alpha X}]\leqslant 1\}\leqslant \sup\{\alpha\geqslant 0: \BE[e^{-\alpha Y}]\leqslant 1\}$. By the definition of the {\dindex}, we have $\Rin(X)\geqslant \Rin(Y)$.
\end{enumerate}

\section{Proof of Lemma \ref{lemma:chara}}
\noindent
For each $\alpha\geqslant 0$, by the monotonic convergence theorem, we have $\lim\limits_{n\to+\infty } \BE[e^{\alpha X_{n}^-}\mathbf{1}_{\{X\leqslant0\}}]=\BE[e^{ {\alpha} X^{-}}\mathbf{1}_{\{X\leqslant0\}}]$ and $\lim\limits_{n\to+\infty } \BE[e^{-\alpha X_{n}^+}\mathbf{1}_{\{X>0\}}]=\BE[e^{ -{\alpha} X^{+}}\mathbf{1}_{\{X>0\}}]$. Adding them up, we obtain that $\lim\limits_{n\to+\infty } \BE[e^{-\alpha X_n } ]=\BE[e^{ -{\alpha} X} ]$. This confirms that $\lim\limits_{n\to +\infty} \Rin(X_n)= \Rin(X)$.
\par
Similarly, by the monotonic convergence theorem, we have $\lim\limits_{n\to+\infty } \BE[u(w+X_{n}^+)\mathbf{1}_{\{X>0\}}]=\BE[u(w+X)\mathbf{1}_{\{X>0\}}]$ and
$\lim\limits_{n\to+\infty } \BE[u(w-X_{n}^-)\mathbf{1}_{\{X\leqslant 0\}}]=\BE[u(w+X)\mathbf{1}_{\{X\leqslant 0\}}]$. Adding them up, we obtain that $\lim\limits_{n\to+\infty } \BE[u(w+X_n)]=\BE[u(w+X)]$.

\section{Proof of Theorem \ref{indexunique}}
\noindent
Since the {\dindex} defined in Definition \ref{define:uindex} coincides with that in Aumann and Serrano (2008) for all bounded outcomes, it also satisfies the duality axiom for all bounded outcomes.
\par
 Assume that the utility functions of two agents $A$ and $B$ are $u_A(\cdot)$ and $u_B(\cdot)$, respectively, and the agent $A$ is uniformly more risk-averse than agent $B$. Suppose that the agent $A$ accepts an outcome $X\in\setl$ at some wealth $w_0$, i.e., either $X=0$ a.s. or $\BE[u_A(w_0+X)]>u_A(w_0).$ We need to prove that the agent $B$ accepts any outcome $Y\in\setl$ satisfying $\Rin(Y)<\Rin(X)$ at any wealth $w$, that is to say, either $Y=0$ a.s. or $\BE[u_B(w+Y)]>u_B(w)$.
\par
Since $\Rin(Y)<\Rin(X)\leqslant +\infty$, $Y\in\seta\cup\setb$. If $Y=0$, then it is accepted. If $Y\in\setb-\{0\}$, then $Y\geqslant 0$ a.s., $\BE[u_B(w+Y)]>u_B(w)$ and it is accepted too.
 \par
Now fix $Y\in\seta$ and $w\in\R$. Define $Y_{\varepsilon}=Y-\varepsilon \mathbf{1}_{\{Y>0\}}$, for every $\varepsilon>0$. Then for each $\alpha>0$ such that $\BE[e^{-\alpha Y}]<1$, we have $\lim\limits_{\varepsilon \to 0+}\BE[e^{-\alpha Y_{\varepsilon}}]=\BE[e^{-\alpha Y}]<1$. So we conclude that $\lim\limits_{\varepsilon \to 0+} \Rin(Y_{\varepsilon})\leqslant \Rin(Y)<\Rin(X)$. Let us fix an $\varepsilon>0$ such that $\Rin(Y_{\varepsilon})<\Rin(X)$.
 \par
Let $X_n=\min\{\max\{X,-n\},n\}$ and $Y_n=\min\{\max\{Y_{\varepsilon},-n\},n\}$. Then by Lemma \ref{lemma:chara}, $\lim\limits_{n\to+\infty } \Rin(Y_n)=\Rin(Y_{\varepsilon})<\Rin(X)=\lim\limits_{n\to+\infty } \Rin(X_n)$. So $\Rin(Y_n)<\Rin(X_n)$ for every large $n$. On the other hand, by Lemma \ref{lemma:chara}, $\lim\limits_{n\to+\infty } \BE[u_A(w_0+X_n)]=\BE[u_A(w_0+X)]>u(w_0)$, so the agent $A$ accepts $X_n$ at wealth $w_0$ for every large $n$. Both $X_n$ and $Y_n$ are bounded, so the agent $B$ accepts $Y_n$ at any wealth $w$, that is $\BE[u_B(w+Y_n)]>u_B(w)$ for every large $n$. Applying Lemma \ref{lemma:chara} again, $\BE[u_B(w+Y_{\varepsilon})]=\lim\limits_{n\to+\infty } \BE[u_B(w+Y_n)]\geqslant u_B(w)$. Since $Y\in\seta$, $\BP(Y>0)>0$, $\BE[u_B(w+Y)]>\BE[u_B(w+Y-\varepsilon \mathbf{1}_{\{Y>0\}})]=\BE[u_B(w+Y_{\varepsilon})]\geqslant u_B(w)$. That is to say, the agent $B$ accepts $Y$ at wealth $w$. Now we proved that the {\dindex} satisfies the duality axiom.
\par
It is evident that the {\dindex} satisfies the positive homogeneity axiom.
\par
The uniqueness of the index can be proved by a similar limit argument. We leave the proof to interested readers.

\section{Proof of Lemma \ref{setdemp}}
\noindent
Suppose $\hat{\sets}_y$ is not empty, then there is $Y\in \lone$ satisfying $\BE[\rho Y]=-y$ and $\BE[u(Y)]>0 $. Then for any $y'<y$, we have $Y+y-y'\in \lone$, $\BE[\rho (Y+y-y')]=-y'$, and $\BE[u(Y+y-y')]\geqslant \BE[u(Y)]>0 $. This indicates that the set $\hat{\sets}_{y'}$ is not empty. By the definition of $\hat{y}$, we conclude that the set $\hat{\sets}_y$ is not empty whenever $y<\hat{y}$.
\par
 It is evident that the set $\hat{\sets}_y$ is empty whenever $y>\hat{y}$ by the definition of $\hat{y}$.
 \par
 We only need to prove that $\hat{\sets}_{\hat{y}}$ is empty whenever $\hat{y}<+\infty$. Suppose not, then there is $Y\in \lone$ satisfying $\BE[\rho Y]=-\hat{y}$ and $\BE[u(Y)]>0 $. Let $Y_{\varepsilon}=Y-\varepsilon\mathbf{1}_{\{Y>0\}}$. By the monotonic convergence theorem,
$ \lim\limits_{\varepsilon\to 0+}\BE[u(Y_{\varepsilon} )\mathbf{1}_{\{Y>0\}}]=\BE[u(Y )\mathbf{1}_{\{Y>0\}}]\geqslant 0$. It is evident that $\BE[u(Y_{\varepsilon} )\mathbf{1}_{\{Y\leqslant 0\}}]=\BE[u(Y )\mathbf{1}_{\{Y\leqslant 0\}}] >-\infty$ since $\BE[u(Y)]>0 $. Adding them up, we obtain $ \lim\limits_{\varepsilon\to 0+}\BE[u(Y_{\varepsilon} ) ]=\BE[u(Y ) ]>0$. Thus, there is $\varepsilon>0$ such that $\BE[u(Y_{\varepsilon} ) ]>0$.
Set $\delta=\varepsilon\BE[\rho\mathbf{1}_{\{Y>0\}}]\geqslant 0$. If $\delta=0$, then $Y\leqslant 0$ a.s. and $\BE[u(Y)]\leqslant 0 $, a contradiction. So $\delta>0$. Since $\BE[\rho Y_{\varepsilon}]=\BE[\rho Y]-\varepsilon\BE[\rho\mathbf{1}_{\{Y>0\}}]=-\hat{y}-\delta<-\hat{y}$, we have that $Y_{\varepsilon}\in\hat{\sets}_{\hat{y}+\delta}$ which contradicts the definition of $\hat{y}$. The proof is complete.

\section{Proof of Lemma \ref{setdemp2}}
\noindent
 Suppose $y<\hat{y}$. Let $\varepsilon>0$ such that $y+\varepsilon<\hat{y}$. Then by Lemma \ref{setdemp}, $ \hat{\sets}_{y+\varepsilon}$ is not empty, so there is $Y\in \lone$ satisfying $\BE[\rho Y]=- (y+\varepsilon)$ and $\BE[u(Y)]>0$. Let $Y_n=\max\{Y,-n\}$. Since $Y\leqslant Y_n\leqslant Y^+$, by the monotonic convergence theorem,
 $\lim\limits_{n\to +\infty} \BE[\rho Y_n] =\BE[\rho Y]=- (y+\varepsilon)<-y$ and $\lim\limits_{n\to +\infty} \BE[u(Y_n)]=\BE[u(Y)]>0$. Therefore, we have $\BE[\rho Y_n] <-y$ and $\BE[u(Y_n)]>0$ for large $n$. Let $\delta>0$ satisfy $\BE[\rho (Y_n+\delta)] =-y$. It is very easy to verify that $ Y_n+\delta $ belongs to the set $ \sets_y$. So the set $ \sets_y$ is not empty.
 \par
 For every $y\geqslant\hat{y}$, by Lemma \ref{setdemp}, $\hat{\sets}_y$ is empty. Because the set $\sets_y$ is a subset of $\hat{\sets}_y$, so it is empty as well.

\section{Proof of Lemma \ref{vconvex}}
\noindent
Let $y_1<y_2<\hat{y}$.
For any two outcomes $Y_1, Y_2\in \lone $ satisfying $\BE[\rho Y_1]=-y_1$, $\BE[\rho Y_2]=-y_2$ and any constant $k \in(0,1)$, by the concavity of $u(\cdot)$,
\begin{align*}
u(k Y_1+(1-k )Y_2)\geqslant k u( Y_1)+(1-k )u(Y_2).
\end{align*}
Because the {\dindex} is monotonically decreasing w.r.t. the first-order dominance and convex,
\begin{align*}
\Rin(u(k Y_1+(1-k )Y_2))&\leqslant \Rin(k u( Y_1)+(1-k )u(Y_2))
 \leqslant k \Rin( u( Y_1))+(1-k )\Rin(u(Y_2)),
\end{align*}
which implies that
\begin{align*}
V(k y_1+(1-k )y_2)\leqslant kV(y_1)+(1-k )V(y_2).
\end{align*}
The monotonically increasing of $V(\cdot)$ is due to the fact that the {\dindex} is monotonically decreasing w.r.t. the first-order dominance. That $V(\cdot)$ is finite on $[0,\hat{y})$ is due to Theorem \ref{wellpose}.

\section{Proof of Proposition \ref{obj1=obj2}}
\noindent
``$\Longrightarrow$'':
We first prove that $0<\alpha^*<+\infty$, then by the definition of the {\dindex}, $\alpha^*=1/\Rin(u(Y^*))$ follows immediately. Since $y\in(0,\hat{y})$, the set $\sets_y$ is not empty, so there is $Y\in \lone$ satisfying $\BE[\rho Y]=-y$ and $u(Y)\in\seta$. Then $0<\Rin(u(Y))<+\infty$. Since $(1/\Rin(u(Y)),Y)$ is a feasible solution of problem \eqref{obj2}, $\alpha^*\geqslant 1/\Rin(u(Y))>0$.
On the other hand, if $\alpha^*=+\infty$, then $1\geqslant \BE[e^{-\alpha^* u(Y^*)}]\geqslant +\infty*\BP(Y^*<0)$, so $\BP(Y^*<0)=0$, $\BE[\rho Y^*]\geqslant 0>-y$, a contradiction.
\par
Next we prove that $Y^*$ is an optimal solution to problem \eqref{obj1}.
 Otherwise, there is $Y\in \lone$ satisfying $\BE[\rho Y]=-y$ and $\Rin(u(Y))<\Rin(u(Y^*))$. Then $u(Y)\in\seta\cup\setb$. Since
 $u(Y)\in\setb$ implies that $Y\geqslant 0$ a.s. and $\BE[\rho Y]\geqslant 0>-y$, we conclude that $u(Y)\in\seta$ and so $0<\Rin(u(Y))<+\infty$. Then the pair $(1/\Rin(u(Y)),Y)$ is a feasible solution of problem \eqref{obj2} and $1/\Rin(u(Y))>1/\Rin(u(Y^*))=\alpha^*$, which contradicts the optimality of $(\alpha^*, Y^*)$ to problem \eqref{obj2}.
\par
``$\Longleftarrow$'': Suppose $Y^*$ is an optimal solution to problem \eqref{obj1}. Since $y\in(0,\hat{y})$, the optimal value $\Rin(u(Y^*))<+\infty$. Since $\Rin(u(Y^*))=0$ leads to $Y^*\geqslant 0$ a.s. and $\BE[\rho Y^*]\geqslant 0>-y$, a contradiction, we conclude that $0<\Rin(u(Y^*))<+\infty$.
Then $(1/\Rin(u(Y^*)), Y^*)$ is a feasible solution to problem \eqref{obj2}. Suppose it is not optimal. Then there is a pair $(\alpha, Y) $ satisfying $Y\in \lone$, $ \BE[e^{-\alpha u(Y)}]\leqslant 1$, $\BE[\rho Y]=-y$ and $\alpha>1/\Rin(u(Y^*))> 0$. Then $Y$ is a feasible solution of problem \eqref{obj1}, but $\Rin(u(Y)) \leqslant 1/\alpha<\Rin(u(Y^*))$, which contradicts the optimality of $Y^*$ to problem \eqref{obj1}.

\section{Proof of Proposition \ref{obj3solution}}
\noindent
By the monotonic convergence theorem, the mapping $\lambda\mapsto\BE[\rho I_{\alpha}\left(\lambda \rho\right)]$ is continuous and increasing on $(-\infty, 0)$ and
\begin{align*}
\lim\limits_{\lambda\to -\infty }\BE[\rho I_{\alpha}\left(\lambda \rho\right)]&=\BE[\lim\limits_{\lambda\to -\infty }\rho I_{\alpha}\left(\lambda \rho\right)]=-\infty,\\
\lim\limits_{\lambda\to 0-}\BE[\rho I_{\alpha}\left(\lambda \rho\right)]&=\BE[\lim\limits_{\lambda\to 0-}\rho I_{\alpha}\left(\lambda \rho\right)]=+\infty,
\end{align*}
so $\BE[\rho I_{\alpha}\left(\lambda \rho\right)]=-y$ admits a negative solution.
\par
The optimality of \eqref{y-alpha} can be shown by the standard Lagrange method. We leave the proof to interested readers.

\section{Proof of Proposition Lemma \ref{v2convex}}
\noindent
Let $\alpha>\alpha'\geqslant 0$ be two scalars, and $Y$, $Y'$ be the corresponding feasible solutions of problem $\PP_{\alpha}$ and $\PP_{\alpha'}$, respectively. Then for any $ k \in(0,1)$,
by the convexity, monotonicity of exponential function, and the concavity of $u(\cdot)$,
\begin{multline*}
k \BE\left[ e^{-\alpha u(Y)}\right]+(1- k ) \BE\left[e^{-\alpha' u(Y')}\right]= \BE\left[ k e^{-\alpha u(Y)}+(1- k )e^{-\alpha' u(Y')}\right]\\
\geqslant \BE\left[e^{- k \alpha u(Y) -(1- k )\alpha' u(Y')}\right]
=\BE\left[e^{-( k \alpha+(1- k )\alpha' )(\beta u(Y)+(1-\beta ) u(Y'))}\right]\\
\geqslant \BE\left[ e^{-( k \alpha+(1- k )\alpha' )( u(\beta Y +(1-\beta) Y'))}\right]\geqslant \vp ( k \alpha+(1- k )\alpha' ),
 \end{multline*}
where $\beta=\frac{ k \alpha}{ k \alpha+(1- k )\alpha'}\in(0,1]$.
The convexity of $\vp (\cdot)$ on $[0,\infty)$ is established.

\section{Proof of Lemma \ref{vp<1}}
\noindent
By Proposition \ref{vfinite}, $\sets_y$ is not empty, so there is $Y\in \lone $ satisfying $\BE[\rho Y]=-y$, and $u(Y)\in\seta$. By {\blemma}, there is $\hat{\alpha}>0$ such that $\vp (\alpha)\leqslant \BE[e^{-\alpha u(Y)}]<1$ whenever $0<\alpha<\hat{\alpha}$.

\section{Proof of Corollary \ref{v2=1uniqe} }
\noindent
Suppose that $\vp (\alpha')=\vp (\alpha'')=1$ for some $\alpha''>\alpha'>0$. By the Lemma \ref{vp<1}, there is $0<\alpha<\alpha'$ such that $\vp (\alpha )<1$. By the convexity of $\vp (\cdot)$, we deduce that
$$ 1=\vp (\alpha')\leqslant \frac{\alpha''-\alpha'}{\alpha''-\alpha}\vp (\alpha)+\frac{\alpha'-\alpha}{\alpha''-\alpha}\vp (\alpha'')<1.$$
This confirms our claim.

\section{Proof of Lemma \ref{vplowerb} }
\noindent
We first consider the following problem
\begin{align*}
 \inf\limits_{ Y} \; \BE[e^{-\alpha u'(0)Y}] \qquad \mathrm{s.t.} \quad \BE[\rho Y]=-y.
\end{align*}
The standard Lagrange method gives the optimal solution $Y^*=-y+\frac{1}{\alpha u'(0)}(\BE[\rho\ln(\rho)]-\ln(\rho)),$
and optimal value $\BE[e^{-\alpha u'(0)Y^*}]=e^{\alpha u'(0)y-\BE[\rho\ln(\rho)]}.$
\par
Because $u(\cdot)$ is concave and $u(0)=0$, we have that $u(x)\leqslant u'(0)x$ for all $x\in\R$.
Therefore, for each $\alpha>0$ and $Y\in \lone $ satisfying $\BE[\rho Y]=-y$, we have
\begin{align*}
 \BE[e^{-\alpha u(Y)}] \geqslant \BE[e^{-\alpha u'(0)Y}]\geqslant \BE[e^{-\alpha u'(0)Y^*}]=e^{\alpha u'(0)y-\BE[\rho\ln(\rho)]}.
\end{align*}
The claim follows immediately.



\end{document}